\documentclass[twocolumn,10pt]{article}
\usepackage[margin=0.8in]{geometry}
\usepackage{amsmath,amssymb,amsthm}
\usepackage{graphicx}
\usepackage{braket}
\usepackage[colorlinks=true,linkcolor=blue,citecolor=blue,urlcolor=blue]{hyperref}
\usepackage{booktabs}

\newtheorem{theorem}{Theorem}
\newtheorem{lemma}[theorem]{Lemma}
\newtheorem{proposition}[theorem]{Proposition}
\newtheorem{corollary}[theorem]{Corollary}
\theoremstyle{definition}
\newtheorem{definition}[theorem]{Definition}
\newtheorem{remark}[theorem]{Remark}

\newcommand{\id}{\mathbb{I}}
\newcommand{\M}{\mathcal{M}}
\newcommand{\C}{\mathbb{C}}
\newcommand{\R}{\mathbb{R}}
\newcommand{\PP}{\mathbb{P}}
\newcommand{\Dloc}{\mathcal{L}}
\newcommand{\Err}{\mathcal{E}}
\newcommand{\BT}{2\mathrm{T}}
\newcommand{\BO}{2\mathrm{O}}
\newcommand{\BI}{2\mathrm{I}}

\begin{document}

\title{\bf Error-Transparent Code Switching Between Permutation-Invariant Codes with Transversal Clifford and Icosahedral Gates}
\author{Zachary P.~Bradshaw\footnotemark[1]\\ {\small QodeX Quantum, Chicago, IL}}
\date{\today}

\twocolumn[
\begin{@twocolumnfalse}
\maketitle
\begin{abstract}
Code switching moves an encoded qubit between two codes whose transversal gates together are universal, but the switching operation itself is usually not transversal and can spread a single fault into an uncorrectable error. We study switching between two permutation-invariant codes on the same qubits, one with transversal Clifford gates and one with transversal gates from the non-Clifford binary icosahedral group. Both are compatible with transversal $X$, $Z$, and $F = HS^\dagger$, and the encoders with this property form a small linear space. We show that its codes of distance at least three are cut out by at most two quadratic equations, and that all of them have the same error geometry in the sense that their single-qubit error states have the same inner products. It follows that a switching unitary commuting with qubit permutations and with the shared transversal gates at every instant, and tolerating one single-qubit fault at an unknown time, exists if and only if the two codes are joined by a continuous path of distance-three codes. Such a switch can be made error transparent, so that a fault at any moment reaches the target code as the same single-qubit error. For real encoders, the distance-three codes are the solutions of a single quadratic equation, and on thirteen qubits, they form an explicit circle through two Clifford codes and an icosahedral code. We construct the switching unitary along this circle as a collective operator, leaving its synthesis into physical control open, and verify that its recovery infidelity per fault is at machine precision, whereas two natural alternatives leave average infidelities of $0.13$ and $0.85$. A survey over odd qubit numbers up to $29$ lists where such switches exist.
\end{abstract}
\vspace{1em}
\end{@twocolumnfalse}
]
{\renewcommand{\thefootnote}{\fnsymbol{footnote}}
\footnotetext[1]{zak@qodexquantum.ai}

\section{Introduction}
\label{sec:intro}

A logical gate is transversal if it is implemented by applying single-qubit gates to physical qubits separately. Such a gate cannot spread a fault on one qubit to other qubits, which makes it naturally fault tolerant. The Eastin--Knill theorem shows that no code admits a universal set of transversal gates~\cite{EastinKnill2009,ZengCrossChuang2011}, and for topological stabilizer codes~\cite{bradshaw2026introduction}, the Bravyi--K\"onig theorem restricts the available gates even further~\cite{BravyiKonig2013}. The most common way around this obstruction supplements transversal Clifford gates with magic states, prepared by distillation~\cite{BravyiKitaev2005} or, more recently, cultivation~\cite{Gidney2024}. A second way is code switching, in which the encoded qubit is moved between two codes whose transversal gates together generate a universal set~\cite{PaetznickReichardt2013,Anderson2014,Bombin2015}. Code switching has been developed into practical protocols~\cite{Butt2024,Heussen2025} and demonstrated experimentally~\cite{Pogorelov2024}, but for stabilizer codes, the cost of the switching step can offset much of the benefit of transversality~\cite{Beverland2021}.

Permutation-invariant (PI) codes are codes whose states do not change when the physical qubits are permuted~\cite{Ruskai2000,PollatsekRuskai2004,Ouyang2014,OuyangFitzsimons2016}. They are not stabilizer codes in general, and their transversal gates are therefore not limited to the Clifford hierarchy. However, PI codes and stabilizer codes can be united under the umbrella of symmetry-based quantum codes~\cite{bradshaw2026symmetry}, which have been used to construct an active error correction routine for simulations of SU(2) lattice gauge theories~\cite{bradshaw2026approximate}. Kubischta and Teixeira constructed PI codes on which applying one fixed gate to every qubit realizes the binary icosahedral group, a finite group of single-qubit gates that contains non-Clifford elements, as well as other groups that stabilizer codes cannot realize transversally~\cite{Kubischta2023,Kubischta2024,Kubischta2025}. They also found a thirteen-qubit PI code of distance three on which the transversal gates realize the full single-qubit Clifford group~\cite{Kubischta2025Intrinsic}. Ouyang, Jing, and Brennen proposed a measurement-free switching protocol between a stabilizer code and PI codes with transversal non-Clifford rotations, using logical teleportation through a bosonic mode~\cite{Ouyang2025Switch}. They showed that its gates are fault tolerant against single spin errors and against phase errors of the mode, while loss and gain errors of the mode propagate to the spins, so that a fully fault-tolerant implementation remains open. The switch studied here acts on a single block, with no ancilla code and no mediating mode, and its fault tolerance is of a different kind. Related structural descriptions of PI, spin, and bosonic codes appear in Refs.~\cite{Gross2021,AydinAlbertBarg2026,Kubischta2025Intrinsic,OuyangBrennen2026}.

Our starting point is a simple observation. The Clifford group and the binary icosahedral group, viewed as groups of single-qubit gates, share a subgroup, the binary tetrahedral group $\BT$ generated by $X$, $Z$, and $F = HS^\dagger$. A PI code whose transversal gates include the Clifford group and a PI code whose transversal gates include the icosahedral group therefore both have transversal $X$, $Z$, and $F$ acting as the logical $X$, $Z$, and $F$. We show that the encoders with this property form a small linear space $\M$, three-dimensional on thirteen qubits, and that both codes are points of it. A unitary that commutes with $X$, $Z$, and $F$ applied to every qubit maps $\M$ to itself without changing the logical state, so switching between the two codes amounts to moving a point through $\M$. The question is which motions are fault tolerant. Most paths through $\M$ pass through encoders that cannot correct single-qubit errors, and a fault that occurs while the encoder is at such a point becomes an uncorrectable error on the target code.

We answer this question for single-qubit faults. First we show that the distance-three conditions of Knill and Laflamme~\cite{KnillLaflamme1997}, applied to encoders in $\M$, reduce to at most two quadratic equations, and we compute them directly from the action of the transversal gates. Next we prove that all codes satisfying these equations have the same error geometry, meaning that the inner products between their error states are the same fixed numbers. This implies that the error states of any two such codes are related by a unitary that commutes with qubit permutations and with the transversal gates, and our main result follows. A switching unitary of this kind that tolerates one single-qubit fault at an unknown time exists if and only if the two codes are joined by a continuous path of distance-three codes, and in that case, the switch can be chosen error transparent, meaning that a fault at any time reaches the target code as the identical single-qubit error. Fault-tolerant switching is possible because the correctable-error geometry remains constant along the path of codes. The notion parallels error-transparent gates in bosonic and superconducting systems~\cite{Vy2013,Kapit2018,Ma2020}, and the construction uses a subspace transport going back to Kato~\cite{Kato1950,KatoBook}.

To apply the criterion, we restrict to encoders with real coefficients. For these, one of the two quadratic equations holds automatically, so the real distance-three codes are the zero set of a single quadratic form, and we show that this set is connected under mild conditions. On thirteen qubits, it is the circle $1485\alpha^2 - 330\alpha\beta + 5\beta^2 + 264\gamma^2 = 0$ in explicit coordinates, which passes through two Clifford codes and through a code with transversal icosahedral gates. We construct the error-transparent switch along this circle and compare it with two natural alternatives.

The paper is organized as follows. Section~\ref{sec:setting} sets up PI codes, the gate groups, and the space of encoders. Section~\ref{sec:quadrics} reduces the distance-three conditions to quadratic equations, and Section~\ref{sec:gram} proves that the error geometry is the same for all distance-three codes. Section~\ref{sec:switch} defines switches and proves the existence and necessity results. Section~\ref{sec:real} treats real encoders, proves the connectivity theorem, and reports a survey over qubit number. Section~\ref{sec:thirteen} works out thirteen qubits explicitly, Section~\ref{sec:numerics} presents the numerical construction and fault analysis, and Section~\ref{sec:discussion} discusses control errors, implementation, and extensions.

\section{Setting}
\label{sec:setting}

\paragraph{Permutation-invariant states.} Let $n$ be odd, let $j = n/2$, and let $\mathcal{H}_n = (\C^2)^{\otimes n}$ be the space of $n$ qubits. A state is permutation invariant if it does not change when the qubits are permuted. These states form the symmetric subspace $V_n$, which has dimension $n + 1$ and orthonormal basis given by the Dicke states
\begin{equation}
  \ket{D_w} = \binom{n}{w}^{-1/2} \sum_{|x| = w} \ket{x},
  \label{eq:dicke}
\end{equation}
for $w = 0, 1, \ldots, n,$ and where the sum runs over all bit strings with exactly $w$ ones. For a single-qubit unitary $g$, we write $D(g) = g^{\otimes n}$ for the gate $g$ applied to every qubit, which we call the transversal gate of $g$. It commutes with permutations of the qubits, so it maps $V_n$ to itself. We note that $D(X)\ket{D_w} = \ket{D_{n-w}}$ and $D(Z)\ket{D_w} = (-1)^w \ket{D_w}$.

We write $\sigma^a_k$ for the Pauli operator $\sigma^a \in \{\sigma^x=X, \sigma^y=Y, \sigma^z=Z\}$ acting on qubit $k$, and
\begin{equation}
  J_a = \frac{1}{2} \sum_{k=1}^{n} \sigma^a_k, \qquad a \in \{x, y, z\},
\end{equation}
for the collective spin operators. On $V_n$, these operators satisfy $J_x^2 + J_y^2 + J_z^2 = j(j+1)$, and $J_z \ket{D_w} = (j - w) \ket{D_w}$. Conjugating by a transversal gate acts on each qubit separately, so the single-qubit rules $XYX = -Y$, $XZX = -Z$, $ZXZ = -X$, $ZYZ = -Y$, and
\begin{equation}
  F X F^\dagger = Y, \qquad F Y F^\dagger = Z, \qquad F Z F^\dagger = X,
  \label{eq:Fcycle}
\end{equation}
with $F = HS^\dagger$, give
\begin{equation}
\begin{aligned}
  D(X)\,(J_x, J_y, J_z)\,D(X)^\dagger &= (J_x, -J_y, -J_z), \\
  D(Z)\,(J_x, J_y, J_z)\,D(Z)^\dagger &= (-J_x, -J_y, J_z), \\
  D(F)\,(J_x, J_y, J_z)\,D(F)^\dagger &= (J_y, J_z, J_x),
\end{aligned}
\label{eq:Jrules}
\end{equation}
where each line is shorthand for three equations, for example $D(X) J_y D(X)^\dagger = -J_y$. In general, conjugation by a single-qubit unitary $g$ rotates the Pauli vector, $g \sigma^a g^\dagger = \sum_b R(g)_{ba} \sigma^b$ for a rotation matrix $R(g)$, and so it follows that $D(g) J_a D(g)^\dagger = \sum_b R(g)_{ba} J_b$.

\paragraph{Gate groups.} Let $H$ be the Hadamard gate, $S = \mathrm{diag}(1, i)$, $F = H S^\dagger$, and
\begin{equation}
  \Phi = \frac{1}{2}\begin{pmatrix} \varphi + i\varphi^{-1} & 1 \\ -1 & \varphi - i \varphi^{-1} \end{pmatrix},
\end{equation}
where $\varphi = \frac{1+\sqrt{5}}{2}$ is the golden ratio. The gate $\Phi$ rotates the Bloch sphere by $72^\circ$ and is not a Clifford gate. To obtain finite groups, we fix the overall phase of each gate by rescaling it to determinant one, writing $\hat g = g/\sqrt{\det g}$ with the principal square root. This does not change how a gate acts on states, only its global phase. Explicitly, $\hat X = -iX$, $\hat Z = -iZ$, $\hat F = e^{-i\pi/4} F$, and $\hat S = e^{-i\pi/4} S$, while $\Phi$ already has determinant one. The binary tetrahedral group $\BT$ is the group generated by $\hat X$, $\hat Z$, and $\hat F$, and it has $24$ elements. Adding $\hat S$ gives the binary octahedral group $\BO$ with $48$ elements, which up to phases is the single-qubit Clifford group. Adding $\Phi$ to $\BT$ instead gives the binary icosahedral group $\BI$ with $120$ elements~\cite{ConwaySmith}. Up to sign, these are the rotations of a tetrahedron, a cube, and an icosahedron, and $\BO \cap \BI = \BT$.

\paragraph{Encoders.} Let $W = \C^2$ be the logical qubit, with basis $e_0, e_1$. An encoder is a linear map $A \colon W \to V_n$ that preserves inner products. Its image $\mathcal{C}_A$ is the code, and $A e_0$, $A e_1$ are the encoded basis states $\ket{0_L}$, $\ket{1_L}$. The transversal gate $D(g)$ implements the logical gate $L$ if $D(g) A = A L$, meaning that applying $g$ to every qubit of an encoded state $A\psi$ gives the encoded state $AL\psi$.

\begin{definition}
Let $\M(n)$ be the set of linear maps $A \colon W \to V_n$ with
\begin{equation}
  D(\hat X) A = A \hat X, \quad D(\hat Z) A = A \hat Z, \quad D(\hat F) A = A \hat F.
  \label{eq:Mdef}
\end{equation}
\end{definition}

The set $\M(n)$ is a vector space, since sums and multiples of such maps satisfy the same equations. If $A$ satisfies~\eqref{eq:Mdef}, then it satisfies $D(k) A = A k$ for every $k \in \BT$ because these elements are products of the three generators. The dimension $m:=\dim(\M(n))$ is listed for $n \le 29$ in Table~\ref{tab:survey}.

Every nonzero $A \in \M(n)$ is a multiple of an encoder. Indeed, $A^\dagger A$ commutes with $\hat X$ and $\hat Z$, because $A^\dagger A \hat X = A^\dagger D(\hat X) A = \hat X A^\dagger A$ and similarly for $\hat Z$, and the only $2 \times 2$ matrices commuting with $X$ and $Z$ are multiples of the identity. Hence $A^\dagger A = \lVert A e_0 \rVert^2 \id$, and $A/\lVert A e_0 \rVert$ is an encoder. Two maps that differ by a nonzero scalar give the same code, so a code is an element of $\M(n)$ up to a scalar. We write $[A]$ for the code determined by $A$ and $\PP(\M(n))$ for the projective space of all such $[A]$. For $A, B \in \M(n)$, we define the inner product $\braket{A, B} = \braket{A e_0 | B e_0}$. The same computation with $A^\dagger B$ in place of $A^\dagger A$ shows that $A^\dagger B$ commutes with $X$ and $Z$, so it is a multiple of $\id$, and its top-left entry is $\braket{A, B}$. Hence $A^\dagger B = \braket{A, B}\id$, which says that $\braket{A e_1 | B e_1} = \braket{A e_0 | B e_0}$ and $\braket{A e_0 | B e_1} = 0$.

\paragraph{Clifford and icosahedral codes.} To say which logical gate each transversal gate performs, we use representations. A 2-dimensional representation of a group $G$ of gates is a rule $\rho$ that assigns to each $g \in G$ a $2 \times 2$ unitary $\rho(g)$, with $\rho(gh) = \rho(g)\rho(h)$ for all $g, h \in G$. An encoder $A$ is compatible with $\rho$ if $D(g) A = A \rho(g)$ for all $g \in G$, so that the transversal gate $D(g)$ performs the logical gate $\rho(g)$. The simplest example is the defining representation $\rho(g) = g$, in which each transversal gate performs the same gate on the logical qubit. The encoders in $\M(n)$ are exactly those compatible with the defining representation of $\BT$.

For the Clifford group $\BO$, two representations agree with the defining one on $\BT$. The first is the defining representation itself, which we denote $\rho_{1/2}$. The second, denoted $\rho_{5/2}$, assigns $g$ to the elements of $\BT$ and $-g$ to the other elements of $\BO$. It is again a representation because the product of two elements of $\BO$ lies in $\BT$ exactly when both or neither of them do, so the signs multiply correctly. The subscripts are the standard names and record that these representations arise from spin $\frac12$ and spin $\frac52$; for these labels and the Galois labels below see Refs.~\cite{Kubischta2023,Kubischta2024}. A Clifford code is an encoder in $\M(n)$ compatible with $\rho_{1/2}$ or with $\rho_{5/2}$. Since $\BO$ is generated by $\BT$ and $\hat S$, this means $D(\hat S) A = A \hat S$ or $D(\hat S) A = -A \hat S$. In either case, the logical gates are the Clifford gates, up to a sign, which is a global phase.

For the icosahedral group $\BI$, the defining representation, again denoted $\rho_{1/2}$, has $D(\Phi)$ performing $\Phi$. A second representation $\rho'$ is obtained by replacing $\sqrt5$ with $-\sqrt5$ in every matrix entry. The entries of the elements of $\BI$ are of the form $u + v\sqrt5$ with $u$ and $v$ complex rational numbers, and this replacement respects sums and products of such numbers, so $\rho'$ satisfies $\rho'(gh) = \rho'(g)\rho'(h)$. It leaves the elements of $\BT$ unchanged, since their entries do not involve $\sqrt5$, and since $\varphi \mapsto -\varphi^{-1}$ under the replacement, it sends $\Phi$ to
\begin{equation}
  \Phi' = \frac12 \begin{pmatrix} -\varphi^{-1} - i\varphi & 1 \\ -1 & -\varphi^{-1} + i\varphi \end{pmatrix}.
\end{equation}
Replacements of this kind are called Galois conjugations, and $\rho'$ is the Galois-conjugate representation. The gate $\Phi'$ is a rotation by $144^\circ$ and is not a Clifford gate, and together with $\BT$ it generates a different group of $120$ gates isomorphic to $\BI$ and equally non-Clifford. An icosahedral code is an encoder in $\M(n)$ compatible with $\rho_{1/2}$ or with $\rho'$, that is, with $D(\Phi) A = A \Phi$ or $D(\Phi) A = A \Phi'$.

We write $\M_\rho(n) \subset \M(n)$ for the encoders compatible with each of these four representations $\rho$. The Clifford and icosahedral codes on $n$ qubits are therefore all points of the single projective space $\PP(\M(n))$.

\section{Distance three as quadratic equations}
\label{sec:quadrics}

We first recall the Knill--Laflamme conditions for PI codes. The following lemma is standard~\cite{PollatsekRuskai2004,Ouyang2014,Gross2021}, but we include the proof because the identities in it are used again below.

\begin{lemma}
\label{lem:PIKL}
Let $A\colon W \to V_n$ be an encoder. The code $\mathcal{C}_A$ corrects arbitrary single-qubit errors if and only if there are numbers $c_a$ and $c_{ab}$, for $a, b \in \{x, y, z\}$, with
\begin{equation}
  A^\dagger J_a A = c_a \id, \qquad A^\dagger \{J_a, J_b\} A = c_{ab} \id.
  \label{eq:PIKL}
\end{equation}
\end{lemma}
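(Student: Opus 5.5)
We will derive the statement from the Knill--Laflamme conditions applied to the error set $\{\id\}\cup\{\sigma^a_k\}$. Correcting arbitrary single-qubit errors is equivalent to correcting this set. By linearity, this is equivalent to $A^\dagger E^\dagger E' A = \lambda_{E,E'}\id$ for all $E,E'$ in the set. The products $E^\dagger E'$ are of four kinds: $\id$; single Paulis $\sigma^a_k$, arising from $\id\cdot\sigma^a_k$ and from $\sigma^a_k\sigma^b_k$ with $a\ne b$ up to a factor $i$; and two-site products $\sigma^a_k\sigma^b_l$ with $k\ne l$. So the code corrects single-qubit errors if and only if there are numbers with $A^\dagger\sigma^a_kA=\mu_{a,k}\id$ for all $a,k$ and $A^\dagger\sigma^a_k\sigma^b_lA=\nu_{ab,kl}\id$ for all $a,b$ and all $k\neq l$. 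The case $k=l$, $a=b$ gives $\id$ and is automatic.

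Next we use permutation invariance to reduce these site-dependent conditions to collective ones. Since $\mathcal{C}_A\subset V_n$, we have $P_\pi A=A$ for every qubit permutation $P_\pi$. Hence $A^\dagger\sigma^a_kA=A^\dagger P_\pi^\dagger\sigma^a_kP_\pi A=A^\dagger\sigma^a_{\pi(k)}A$, so $A^\dagger\sigma^a_kA$ does not depend on $k$. Summing over $k$ gives
\begin{equation*}
A^\dagger\sigma^a_kA=\tfrac{2}{n}A^\dagger J_aA.
\end{equation*}
In the same way, $A^\dagger\sigma^a_k\sigma^b_lA$ for $k\neq l$ takes the same value $T_{ab}$ for every ordered pair $(k,l)$, since the symmetric group acts transitively on ordered pairs of distinct sites. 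Note that $T_{ab}=T_{ba}$ because the operators commute. Now expand
\begin{equation*}
\{J_a,J_b\}=\tfrac14\sum_{k,l}\{\sigma^a_k,\sigma^b_l\}=\tfrac14\Bigl(\sum_k\{\sigma^a_k,\sigma^b_k\}+2\sum_{k\ne l}\sigma^a_k\sigma^b_l\Bigr).
\end{equation*}
The diagonal term equals $\tfrac{n}{2}\delta_{ab}\id$, because $\{\sigma^a,\sigma^b\}=2\delta_{ab}\id$. This yields
\begin{equation*}
A^\dagger\{J_a,J_b\}A=\tfrac n2\delta_{ab}\id+\tfrac{n(n-1)}{2}\,T_{ab}.
\end{equation*}

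Both implications now follow from these identities. For the forward direction, each $\mu_{a,k}$ and $\nu_{ab,kl}$ is a scalar, so the right-hand sides above are scalar multiples of $\id$. This gives \eqref{eq:PIKL}. For the converse, suppose \eqref{eq:PIKL} holds. The two displays, solved for $A^\dagger\sigma^a_kA$ and for $T_{ab}=A^\dagger\sigma^a_k\sigma^b_lA$, express every Knill--Laflamme matrix as a scalar multiple of $\id$; the solution for $T_{ab}$ requires $n\ge2$. The mixed same-site terms $\sigma^a_k\sigma^b_k=i\epsilon_{abc}\sigma^c_k$ reduce to the single-site case.

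The main point needing care is the first step: checking that the listed product families exhaust the Knill--Laflamme requirements, including the same-site products. The rest is bookkeeping with the symmetrization identity, which is presumably the identity reused later in the paper.
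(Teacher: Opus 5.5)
Your proof is correct and follows the paper's route: the Knill--Laflamme conditions on $\{\id\}\cup\{\sigma^a_k\}$, absorption of the same-site products into single-site ones, permutation averaging, and an expansion of collective operators into one- and two-site terms. The only difference is that you expand $\{J_a,J_b\}$ directly, whose diagonal part $\frac{n}{2}\delta_{ab}\id$ has no $J_c$ term. The paper instead expands $4J_aJ_b$ and uses the first family of conditions twice to discard the terms $2i\sum_c\epsilon_{abc}J_c$, which in fact cancel exactly, so your version shows the two-site condition is equivalent to the anticommutator condition without relying on the first family.
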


\begin{proof}
Let $P = AA^\dagger$ be the orthogonal projection onto the code, and note that $A^\dagger A = \id$. The Knill--Laflamme condition~\cite{KnillLaflamme1997} requires $P E^\dagger E' P = c\,P$ for all errors $E, E'$ in the set, with a number $c$ depending on $E$ and $E'$. Multiplying by $A^\dagger$ on the left and by $A$ on the right turns this into $A^\dagger E^\dagger E' A = c\,\id$, and conversely $A^\dagger E^\dagger E' A = c\,\id$ gives back $P E^\dagger E' P = A(A^\dagger E^\dagger E' A)A^\dagger = cP$. Since the $(i,i')$ entry of $A^\dagger O A$ is $\braket{i_L | O | i'_L}$, this is the familiar form $\braket{i_L | E^\dagger E' | i'_L} = c\,\delta_{ii'}$, and we work with it throughout.

For single-qubit errors, the set is $\Err_1 = \{\id\} \cup \{\sigma^a_k\}$. The Pauli operators are Hermitian, so the products $E^\dagger E'$ are the identity, the operators $\sigma^a_k$, the products $\sigma^a_k \sigma^b_k$ on one qubit, and the products $\sigma^a_k \sigma^b_l$ on two different qubits. On a single qubit $\sigma^a \sigma^b = \delta_{ab}\id + i\sum_c \epsilon_{abc}\sigma^c$, so the products on one qubit reduce to the first two kinds. The code therefore corrects all single-qubit errors exactly when $A^\dagger \sigma^a_k A$ and, for $k \ne l$, $A^\dagger \sigma^a_k \sigma^b_l A$ are multiples of $\id$.

Permutation invariance reduces these conditions to statements about collective operators. If $\pi$ is the unitary that permutes the qubits, then $\pi A = A$, and hence $A^\dagger O A = A^\dagger \pi^\dagger O \pi A$ for every operator $O$. Conjugation by $\pi$ only relabels qubits, $\pi^\dagger \sigma^a_k \pi = \sigma^a_{\pi^{-1}(k)}$, so $A^\dagger \sigma^a_k A$ does not depend on $k$, and $A^\dagger \sigma^a_k \sigma^b_l A$ does not depend on the ordered pair $k \ne l$. Each is therefore equal to its average, which gives
\begin{equation}
  A^\dagger \sigma^a_k A = \frac{1}{n}\sum_{k'} A^\dagger \sigma^a_{k'} A = \frac{2}{n} A^\dagger J_a A
\end{equation}
and $A^\dagger \sigma^a_k \sigma^b_l A = A^\dagger T_{ab} A/(n(n-1))$ with $T_{ab} = \sum_{k \ne l} \sigma^a_k \sigma^b_l$.

It remains to express $T_{ab}$ through the collective operators. Writing $4J_aJ_b$ as a double sum over qubits and separating the terms with equal indices,
\begin{equation}
\begin{aligned}
  4 J_a J_b &= \sum_{k,l} \sigma^a_k \sigma^b_l = T_{ab} + \sum_k \sigma^a_k \sigma^b_k \\
  &= T_{ab} + n\delta_{ab}\id + 2i\sum_c \epsilon_{abc} J_c,
\end{aligned}
\end{equation}
where the last equality uses the one-qubit product rule. Hence
\begin{equation}
  T_{ab} = 4 J_a J_b - n\delta_{ab}\id - 2i\sum_c \epsilon_{abc} J_c .
  \label{eq:Tab}
\end{equation}
The first family of conditions now says that $A^\dagger J_a A$ is a multiple of $\id$. When it holds, the last term in~\eqref{eq:Tab} contributes a multiple of $\id$, so the second family says that $A^\dagger J_a J_b A$ is a multiple of $\id$. Since $J_a J_b = \frac12\{J_a, J_b\} + \frac{i}{2}\sum_c \epsilon_{abc} J_c$, and the commutator term again contributes a multiple of $\id$, this is equivalent to $A^\dagger \{J_a, J_b\} A$ being a multiple of $\id$. The conditions are therefore exactly~\eqref{eq:PIKL}.
\end{proof}

For an encoder $A \in \M(n)$, the $2 \times 2$ matrices in~\eqref{eq:PIKL} are far from arbitrary. Define the two real numbers
\begin{equation}
  c_1(A) = \braket{0_L | J_z | 0_L}, \qquad c_2(A) = \braket{0_L | \{J_x, J_y\} | 0_L},
  \label{eq:c1c2}
\end{equation}
where $\ket{0_L} = A e_0$. Both are quadratic in $A$, and they are real because $J_z$ and $\{J_x, J_y\}$ are Hermitian.

\begin{proposition}
\label{prop:reduction}
Let $A \in \M(n)$ be an encoder. Then
\begin{equation}
  A^\dagger J_a A = c_1(A)\, \sigma^a, \qquad A^\dagger \{J_b, J_c\} A = c_2(A)\, \sigma^a
  \label{eq:equiv}
\end{equation}
for every cyclic permutation $(a, b, c)$ of $(x, y, z)$, and $A^\dagger J_a^2 A = \frac{j(j+1)}{3} \id$ for every $a$. Consequently the code $\mathcal{C}_A$ has distance at least three if and only if $c_1(A) = c_2(A) = 0$, and the distance-three codes form the set
\begin{equation}
  \Dloc(n) = \{[A] \in \PP(\M(n)) \mid c_1(A) = c_2(A) = 0\}.
\end{equation}
\end{proposition}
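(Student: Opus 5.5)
The plan is to use the intertwining relations $D(k)A = Ak$ for $k\in\BT$ together with the conjugation rules \eqref{eq:Jrules}. For any operator $O$ on $V_n$ and $k\in\BT$ we have
\begin{equation}
  A^\dagger D(k)^\dagger O D(k) A = k^\dagger (A^\dagger O A) k .
\end{equation}
This follows from $D(k)A = Ak$ and its adjoint $A^\dagger D(k)^\dagger = k^\dagger A^\dagger$.

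\emph{Step 1: the linear terms.} Set $M_a = A^\dagger J_a A$, a Hermitian $2\times 2$ matrix. Taking $k=\hat X$ and using $D(X)^\dagger J_z D(X) = -J_z$ gives $X M_z X = -M_z$. Taking $k=\hat Z$ gives $Z M_z Z = M_z$. The second relation forces $M_z$ to be diagonal, and the first then forces it to be traceless, so $M_z = \lambda Z$. Its top-left entry is $\braket{0_L|J_z|0_L} = c_1(A)$, hence $M_z = c_1(A)\,\sigma^z$. Next we conjugate by $\hat F$. From \eqref{eq:Jrules}, $D(F)^\dagger J_a D(F)$ cycles the $J$'s backwards, and on the logical side $F^\dagger \sigma^a F$ cycles the Paulis in the same way. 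So $M_x = F^\dagger M_z F$-type relations transport $c_1\sigma^z$ to $c_1\sigma^x$ and $c_1\sigma^y$. The only care needed is to match the direction of the cycle on both sides, which is automatic because the same rotation $R(F)$ acts on $J$ and on $\sigma$.

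\emph{Step 2: the quadratic terms.} For $N_z := A^\dagger\{J_x,J_y\}A$, conjugation by $D(X)$ sends $\{J_x,J_y\}\mapsto -\{J_x,J_y\}$, while $D(Z)$ fixes it. The argument of Step 1 therefore gives $N_z = c_2(A)\,\sigma^z$, and $F$-cycling yields the other two cyclic cases. For the squares, $J_a^2$ is invariant under $D(X)$ and $D(Z)$, so $A^\dagger J_a^2 A$ commutes with $X$ and $Z$ and is a multiple of $\id$. $F$-cycling shows that the three multiples are equal. Summing and using $J_x^2+J_y^2+J_z^2=j(j+1)$ on $V_n$ gives $j(j+1)/3$ for each.

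\emph{Step 3: conclusion.} By Lemma~\ref{lem:PIKL}, the code has distance at least three iff every $M_a$ and every $A^\dagger\{J_a,J_b\}A$ is a multiple of $\id$. The diagonal cases $a=b$ hold automatically by Step 2, and $\{J_a,J_b\}$ is symmetric in $a,b$, so the off-diagonal cases reduce to the three cyclic ones. Since $\sigma^a$ is not a multiple of $\id$, $c\,\sigma^a\propto\id$ iff $c=0$. Hence distance three is equivalent to $c_1(A)=c_2(A)=0$. Both quantities scale by $|\lambda|^2$ under $A\mapsto\lambda A$, so the condition is well defined on $\PP(\M(n))$, and the description of $\Dloc(n)$ follows.

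The main obstacle is bookkeeping rather than anything conceptual. The point to check is that $F$-conjugation sends $M_z$ to the correct $M_a$ with coefficient exactly $c_1$ and no sign. I would verify this by writing $D(F)^\dagger J_z D(F) = J_y$ from \eqref{eq:Fcycle} and noting that $F^\dagger \sigma^z F = \sigma^y$ as well. The ordering is the same on both sides, so the sign is $+$, and similarly for the anticommutators.
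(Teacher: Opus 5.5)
Your proposal is correct and follows the paper's proof essentially step by step. It uses the same transfer rule $A^\dagger D(g)^\dagger O D(g) A = g^\dagger (A^\dagger O A)\, g$, the same commute/anticommute argument with $X$ and $Z$ to identify $A^\dagger J_z A$ and $A^\dagger \{J_x, J_y\} A$ as multiples of $\sigma^z$, $\hat F$-cycling (with the correct sign) for the other cyclic components, and the Casimir $J_x^2+J_y^2+J_z^2=j(j+1)$ for the squares. The only cosmetic difference is that you obtain $A^\dagger J_x A$ and $A^\dagger J_y A$ directly by $\hat F$-transport from $A^\dagger J_z A$. The paper instead first shows, using $X$ and $Z$, that each is a multiple of its own $\sigma^a$, and then uses $\hat F$ to equate the coefficients.
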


\begin{proof}
All three statements follow from one computation. If $O$ is an operator on $V_n$ and $g$ is one of the generators $\hat X, \hat Z, \hat F$, then~\eqref{eq:Mdef} gives
\begin{equation}
  A^\dagger \bigl(D(g)^\dagger O D(g)\bigr) A = g^\dagger (A^\dagger O A)\, g ,
  \label{eq:conjrule}
\end{equation}
and more generally $A^\dagger (D(g)^\dagger O D(g)) A = \rho(g)^\dagger (A^\dagger O A) \rho(g)$ whenever $A$ is compatible with a representation $\rho$ and $g$ is in its group.
Let $\Lambda_a = A^\dagger J_a A$. By~\eqref{eq:Jrules}, conjugation by $D(X)$ fixes $J_x$ and negates $J_y, J_z$, and conjugation by $D(Z)$ negates $J_x, J_y$ and fixes $J_z$. Applying~\eqref{eq:conjrule} with $O = J_z$ shows that $\Lambda_z$ commutes with $Z$ and anticommutes with $X$, so $\Lambda_z$ is diagonal with opposite diagonal entries, that is, $\Lambda_z = c\,\sigma^z$ with $c = \braket{0_L | J_z | 0_L} = c_1(A)$. In the same way, $\Lambda_x$ is a multiple of $\sigma^x$ and $\Lambda_y$ a multiple of $\sigma^y$. The gate $\hat F$ permutes $(J_x, J_y, J_z)$ cyclically by~\eqref{eq:Jrules} and permutes $(\sigma^x, \sigma^y, \sigma^z)$ in the same way by~\eqref{eq:Fcycle}, so the three multiples agree. This proves the first relation.

For the second, let $\Lambda_{xy} = A^\dagger \{J_x, J_y\} A$. Conjugation by $D(X)$ negates $\{J_x, J_y\}$ and conjugation by $D(Z)$ fixes it, so $\Lambda_{xy}$ anticommutes with $X$ and commutes with $Z$, and hence $\Lambda_{xy} = c_2(A)\, \sigma^z$. Conjugation by $D(\hat F)$ then gives $A^\dagger \{J_y, J_z\} A = c_2(A)\sigma^x$ and $A^\dagger \{J_z, J_x\} A = c_2(A)\sigma^y$. Finally, $J_a^2$ is fixed by conjugation with $D(X)$ and $D(Z)$, so $A^\dagger J_a^2 A$ commutes with $X$ and $Z$ and is a multiple of $\id$. Conjugation by $D(\hat F)$ shows that the three multiples are equal, and since $J_x^2 + J_y^2 + J_z^2 = j(j+1)$ on $V_n$, each equals $\frac{j(j+1)}{3}$. Every condition in~\eqref{eq:PIKL} is now expressed through $c_1$ and $c_2$. The matrices $c_1 \sigma^a$ and $c_2 \sigma^a$ are multiples of $\id$ only when $c_1 = c_2 = 0$, and Lemma~\ref{lem:PIKL} completes the proof.
\end{proof}

We call $\Dloc(n)$ the set of distance-three codes, by which we mean codes of distance at least three, and we use the phrase in that sense throughout. For Clifford and icosahedral codes, the extra transversal gate removes one or both of the two conditions.

\begin{corollary}
\label{cor:larger}
Let $A \in \M_\rho(n)$ be an encoder.
\begin{enumerate}
  \item If $A$ is a Clifford code, or an icosahedral code with $D(\Phi)$ implementing $\Phi$, then $c_2(A) = 0$, and $\mathcal{C}_A$ has distance at least three if and only if $c_1(A) = 0$.
  \item If $A$ is an icosahedral code with $D(\Phi)$ implementing $\Phi'$, then $c_1(A) = c_2(A) = 0$, so $\mathcal{C}_A$ always has distance at least three.
\end{enumerate}
\end{corollary}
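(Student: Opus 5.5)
The plan is to use the generalized conjugation rule~\eqref{eq:conjrule} with the extra generator ($\hat S$ for Clifford codes, $\Phi$ for icosahedral codes). Proposition~\ref{prop:reduction} already tells us the shapes
\begin{equation}
\Lambda_a := A^\dagger J_a A = c_1\sigma^a, \qquad \Lambda_{bc} := A^\dagger\{J_b,J_c\}A = c_2\sigma^a .
\end{equation}
The extra gate adds a constraint that these shapes can satisfy only with $c_1$ or $c_2$ equal to zero.

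\emph{Clifford case.} Conjugation by $D(\hat S)$ sends $J_x\mapsto J_y$, $J_y\mapsto -J_x$, $J_z\mapsto J_z$, up to the sign convention for $S$. Hence $\{J_x,J_y\}\mapsto -\{J_x,J_y\}$. Applying the rule with $\rho(\hat S)=\pm\hat S$ gives
\begin{equation}
c_2\sigma^z = \hat S^\dagger (c_2\sigma^z)\hat S\cdot(-1) = -c_2\sigma^z,
\end{equation}
since $\hat S$ commutes with $\sigma^z$. The sign in $\rho_{5/2}$ cancels between $\rho^\dagger$ and $\rho$. So $c_2=0$. The same computation with $J_z$ gives $c_1\sigma^z=c_1\sigma^z$, which imposes nothing, so distance three reduces to $c_1=0$.

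\emph{Icosahedral case.} Let $R=R(\Phi)$ be the $72^\circ$ rotation. The vector $(\Lambda_x,\Lambda_y,\Lambda_z)$ transforms as $\Lambda_a\mapsto \sum_b R_{ba}\Lambda_b$ under $D(\Phi)$. On the logical side, $\rho(\Phi)^\dagger\sigma^a\rho(\Phi)=\sum_b R'_{ab}\sigma^b$, where $R'$ is the rotation of $\rho(\Phi)$ (up to the transpose/inverse bookkeeping). The identity $\Lambda_a=c_1\sigma^a$ then forces $c_1 R = c_1 R'$ in the appropriate orientation.
\begin{itemize}
\item For $\rho=\rho_{1/2}$ we have $R'=R$, so there is no constraint on $c_1$.
\item For $\rho'$, $R'$ is the $144^\circ$ rotation. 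It differs from $R$ (the traces $1+2\cos72^\circ$ and $1+2\cos144^\circ$ differ), so $c_1=0$.
\end{itemize}

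For $c_2$, I would package the quadratic data as the symmetric traceless tensor $Q_{ab}=A^\dagger\{J_a,J_b\}A-\tfrac{2j(j+1)}{3}\delta_{ab}\id$. Proposition~\ref{prop:reduction} gives $Q_{ab}=c_2|\epsilon_{abc}|\sigma^c$ for $a\neq b$ and $Q_{aa}=0$. This is a spin-$2$ object valued in logical Paulis, which are spin $1$. Equivariance under $D(\Phi)$ says the linear map $\sigma^c\mapsto Q$ intertwines the $\rho$-adjoint action (the 3-dimensional irreducible of $\BI$, or of its Galois twist) with the 5-dimensional irreducible of $\BI$ on traceless symmetric tensors. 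These are non-isomorphic irreducibles of $\BI$, since they have different dimensions, so by Schur's lemma the map vanishes and $c_2=0$ in both icosahedral cases. The same Schur argument would also handle the Clifford case via $\mathrm{O}$, where the 5-dimensional representation splits as $2\oplus 3'$ and $3'\not\cong 3$. Combining the icosahedral results: $\rho_{1/2}$ gives $c_2=0$, and $\rho'$ gives $c_1=c_2=0$, which is statement 2.

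The main obstacle I expect is justifying that the spin-$2$ tensor representation of the rotation group restricted to the icosahedral rotations is irreducible (five-dimensional). It is, by the standard character table of $A_5$: the $\ell=2$ character at a $72^\circ$ rotation is $1+2\cos72^\circ+2\cos144^\circ=0$, which matches the 5-dimensional irreducible. I would also need to check carefully that the equivariance is genuinely an intertwiner rather than an anti-intertwiner, i.e.\ that the index ordering and transposes in $R$ versus $R'$ are consistent.
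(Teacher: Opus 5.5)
Your proposal is correct. The Clifford step and the $c_1=0$ step for $\rho'$ are the same as in the paper. Comparing traces, $1+2\cos 72^\circ \neq 1+2\cos 144^\circ$, is a good way to make $R\neq R'$ independent of the transpose/inverse bookkeeping.

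You depart from the paper on $c_2=0$ in the icosahedral case. The paper uses only the extra generator $\Phi$ on top of the $\BT$-shape from Proposition~\ref{prop:reduction}. It writes $A^\dagger(u\cdot J)^2A=\frac{j(j+1)}{3}|u|^2\id+c_2\,v(u)\cdot\sigma$ with $v(u)=(u_yu_z,u_zu_x,u_xu_y)$. It then derives $c_2\,v(R^{-1}u)=c_2\,R_\rho^{-1}v(u)$ and tests this at $u=e_x$: there $v(e_x)=0$, but $v(R^{-1}e_x)\neq 0$ because $R^{-1}e_x$ has three nonzero components.

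You instead use equivariance under the whole group and Schur's lemma. Your intertwiner worry resolves cleanly. Set $\Theta(w)_{ab}=\sum_c|\epsilon_{abc}|w_c$. The conjugation rule then gives $c_2\,R(g)\Theta(w)R(g)^{\mathsf T}=c_2\,\Theta(R_\rho(g)w)$ for all $g\in\BI$. Both $g\mapsto R(g)$ and $g\mapsto R_\rho(g)$ are homomorphisms, so this is a genuine intertwiner.

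Two small tightenings are needed:
\begin{enumerate}
  \item Schur's lemma needs equivariance under all of $\BI$, not just under $D(\Phi)$ as you wrote. Under the cyclic group generated by $\Phi$ alone, both spaces are reducible and share constituents. The fix is free, since compatibility with $\rho$ holds on the whole group.
  \item One character value does not prove irreducibility. Either check that the full character $(5,1,-1,0,0)$ has norm one, or check directly that $\langle\chi_3,\chi_{\ell=2}\rangle=\frac{1}{60}(15-15)=0$, and likewise for $3'$. The latter is all you actually need.
\end{enumerate}

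As for what each route buys: the paper's computation is elementary and self-contained, needing only one explicitly rotated vector. Your route is more conceptual and uniform. It treats both Galois variants at once, and via $S_4$, where spin $2$ decomposes as $E\oplus T_2$, it also covers the Clifford case. It also makes the reason for $c_2=0$ visible: the spin-$2$ part contains no copy of the logical adjoint representation, in the spirit of Remark~\ref{rem:structural}. The cost is invoking character tables.
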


\begin{proof}
For a Clifford code, $D(\hat S) A = \pm A \hat S$ and the rule~\eqref{eq:conjrule} holds for $g = \hat S$ with $\rho(\hat S) = \pm \hat S$, and the sign cancels. From $S^\dagger X S = -Y$ and $S^\dagger Y S = X$, we get $D(S)^\dagger \{J_x, J_y\} D(S) = -\{J_x, J_y\}$, so by Proposition~\ref{prop:reduction}, $-c_2 \sigma^z = S^\dagger (c_2 \sigma^z) S = c_2 \sigma^z$ and $c_2 = 0$.

For an icosahedral code, we use the same kind of computation with the extra gate $\Phi$. For a vector $u \in \R^3$ write $u \cdot J = u_x J_x + u_y J_y + u_z J_z$. Expanding the square and pairing the terms with $a \ne b$, we have $(u \cdot J)^2= \sum_a u_a^2 J_a^2 + u_y u_z \{J_y, J_z\} + u_z u_x \{J_z, J_x\} + u_x u_y \{J_x, J_y\}$.
Sandwiching with $A^\dagger$ and $A$ and using Proposition~\ref{prop:reduction} for each term gives
\begin{equation}
  A^\dagger (u \cdot J)^2 A = \tfrac{j(j+1)}{3}\, \lvert u \rvert^2\, \id + c_2\, v(u) \cdot \sigma,
\end{equation}
with $v(u) = (u_y u_z,\ u_z u_x,\ u_x u_y)$. Conjugating $(u \cdot J)^2$ by $D(\Phi)$ replaces $u$ by $R^{-1}u$, where $R = R(\Phi)$, while the compatibility $D(\Phi) A = A \rho(\Phi)$ conjugates the right side by $\rho(\Phi)$, which replaces $v(u)$ by $R_\rho^{-1} v(u)$ with $R_\rho$ the rotation of the logical gate $\rho(\Phi)$. Hence $c_2\, v(R^{-1}u) = c_2\, R_\rho^{-1} v(u)$ for every $u$. Take $u = e_x$, the unit vector along $x$. Then $v(e_x) = 0$, while $R^{-1} e_x = \frac12(\varphi^{-1}, 1, -\varphi)$ has three nonzero components, so $v(R^{-1}e_x) = \frac14(-\varphi, -1, \varphi^{-1}) \ne 0$, and therefore $c_2 = 0$. This proves part 1 for icosahedral codes and the statement $c_2 = 0$ in part 2, for either choice of $\rho(\Phi)$.

For part 2, it remains to show $c_1 = 0$. The rotation rule $D(\Phi)^\dagger (u \cdot J) D(\Phi) = (R^{-1} u) \cdot J$, with $R = R(\Phi)$ and $u \in \R^3$, together with $D(\Phi) A = A \Phi'$ and the first relation in~\eqref{eq:equiv}, gives
\begin{equation}
  c_1\, (R^{-1} u) \cdot \sigma = c_1\, \Phi'^\dagger (u \cdot \sigma) \Phi' = c_1\, (R'^{-1} u) \cdot \sigma,
\end{equation}
where $R' = R(\Phi')$. The rotations $R$ and $R'$ are different, one by $72^\circ$ and the other by $144^\circ$, so $c_1 = 0$.
\end{proof}

Part 2 says that an icosahedral code implementing $\Phi'$ has distance three automatically, while one implementing $\Phi$ must satisfy $c_1 = 0$. Table~\ref{tab:survey} shows that on seven qubits, no encoder implementing $\Phi$ exists, and on thirteen qubits, the one that exists has $c_1 \ne 0$. This is why the known distance-three codes with transversal icosahedral gates, including the code of Pollatsek and Ruskai~\cite{PollatsekRuskai2004}, implement $\Phi'$; see also Ref.~\cite{Kubischta2023}. Corollary~\ref{cor:larger} is consistent with the description in Ref.~\cite{Kubischta2025Intrinsic}, which organizes error-protection conditions by symmetry type.

\section{All distance-three codes have the same error geometry}
\label{sec:gram}

A code corrects a set of errors when the states obtained by applying these errors to the codewords have inner products that do not depend on the logical state. The matrix of these inner products, the error Gram matrix, determines how the recovery operation works. The central fact of this paper is that this matrix is the same for every distance-three code in $\M(n)$. Let $\Err_1 = \{\id\} \cup \{\sigma^a_k\}$ denote the $1 + 3n$ single-qubit Pauli errors, including the identity, indexed by $\mu \in \{0\} \cup \{(a, k)\}$, and write $E_\mu$ for the error with index $\mu$.

\begin{theorem}
\label{thm:gram}
Let $A \in \M(n)$ be an encoder with $[A] \in \Dloc(n)$. For all $\mu, \nu$ and $i, i' \in \{0, 1\}$,
\begin{equation}
  \braket{E_\mu A e_i | E_\nu A e_{i'}} = \delta_{ii'}\, \Gamma_{\mu\nu},
\end{equation}
where $\Gamma_{00} = 1$, $\Gamma_{0,(a,k)} = 0$, and
\begin{equation}
  \Gamma_{(a,k),(b,l)} = \delta_{ab} \left( \frac{2}{3} \delta_{kl} + \frac{1}{3} \right).
  \label{eq:Gamma}
\end{equation}
The matrix $\Gamma$ is real, depends only on $n$, and in particular does not depend on $A$. It is invertible, with eigenvalues $1$, $\frac{2}{3}$, and $\frac{n+2}{3}$.
\end{theorem}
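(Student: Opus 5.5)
We compute each entry of the error Gram matrix directly from Proposition~\ref{prop:reduction}, using the reduction to collective operators from the proof of Lemma~\ref{lem:PIKL}. Since the Pauli operators are Hermitian, $\braket{E_\mu A e_i | E_\nu A e_{i'}}$ is the $(i,i')$ entry of $A^\dagger E_\mu E_\nu A$. We treat three cases: $\mu=\nu=0$, exactly one index equal to $0$, and both indices nonzero.

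The first two cases are immediate. For $\mu=\nu=0$ we have $A^\dagger A=\id$, so $\Gamma_{00}=1$. For $\mu=0$ and $\nu=(a,k)$, permutation invariance gives $A^\dagger\sigma^a_k A=\frac{2}{n}A^\dagger J_aA$. By Proposition~\ref{prop:reduction} this equals $\frac{2}{n}c_1(A)\sigma^a$, which vanishes because $[A]\in\Dloc(n)$. Hence $\Gamma_{0,(a,k)}=0$, and the same holds with the indices swapped.

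Now take $\mu=(a,k)$ and $\nu=(b,l)$. If $k=l$, the one-qubit product rule gives $\sigma^a_k\sigma^b_k=\delta_{ab}\id+i\sum_c\epsilon_{abc}\sigma^c_k$. The $\sigma^c_k$ terms sandwich to $\frac{2}{n}c_1\sigma^c=0$, so the entry is $\delta_{ab}\id$. This matches $\delta_{ab}(\frac23+\frac13)$.

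If $k\neq l$, permutation invariance gives $A^\dagger\sigma^a_k\sigma^b_lA=A^\dagger T_{ab}A/(n(n-1))$. We then use~\eqref{eq:Tab} to write
\[
A^\dagger T_{ab}A=4A^\dagger J_aJ_bA-n\delta_{ab}\id-2i\textstyle\sum_c\epsilon_{abc}A^\dagger J_cA .
\]
The last term vanishes since $c_1=0$. To handle $J_aJ_b$, write it as $\frac12\{J_a,J_b\}+\frac i2\sum_c\epsilon_{abc}J_c$. The commutator part vanishes for the same reason. For $a\ne b$ the anticommutator sandwiches to $c_2\sigma^{c}=0$, and for $a=b$ it gives $A^\dagger J_a^2A=\frac{j(j+1)}{3}\id$.

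So for $a\neq b$ the entry is $0$. For $a=b$ it is $\bigl(4\cdot\frac{j(j+1)}{3}-n\bigr)/(n(n-1))$. With $j=n/2$, $4j(j+1)=n(n+2)$, so this becomes
\[
\frac{n(n+2)/3-n}{n(n-1)}=\frac{(n-1)/3}{n-1}=\frac13 .
\]
This is exactly $\delta_{ab}(\frac23\delta_{kl}+\frac13)$ for $k\ne l$. The only real bookkeeping point is this cancellation, and it closes cleanly. In every case the $2\times2$ matrix is a multiple of $\id$, which gives the factor $\delta_{ii'}$. The resulting $\Gamma$ is real and depends only on $n$.

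For the spectrum, $\Gamma$ is block diagonal: a $1\times1$ block equal to $1$, plus three identical $n\times n$ blocks, one for each Pauli label $a$. Each of these blocks is $\frac23\id_n+\frac13\mathbf{1}\mathbf{1}^T$. Its eigenvalues are $\frac23+\frac n3=\frac{n+2}{3}$ on the all-ones vector, and $\frac23$ on its orthogonal complement. All eigenvalues are positive, so $\Gamma$ is invertible.
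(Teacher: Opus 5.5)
Your proof is correct and follows the paper's argument step for step. The $\mu=0$ entries come from $A^\dagger\sigma^a_kA=\frac{2}{n}A^\dagger J_aA=0$, the $k=l$ entries from the one-qubit product rule, the $k\neq l$ entries from Eq.~\eqref{eq:Tab} with Proposition~\ref{prop:reduction} and the cancellation $\frac{4}{3}j(j+1)-n=\frac{n(n-1)}{3}$, and the spectrum from the blocks $\frac23\id_n+\frac13\mathbf{1}\mathbf{1}^{\mathsf T}$; the only difference is that you spell out the anticommutator/commutator split of $J_aJ_b$, which the paper leaves implicit.
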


\begin{proof}
By Proposition~\ref{prop:reduction}, $A^\dagger J_a A = 0$, $A^\dagger \{J_a, J_b\} A = 0$ for $a \ne b$, and $A^\dagger J_a^2 A = \frac{j(j+1)}{3}\id$. The entries with $\mu = 0$ follow from $A^\dagger \sigma^a_k A = \frac{2}{n} A^\dagger J_a A = 0$. For $k = l$, we have $A^\dagger \sigma^a_k \sigma^b_k A = \delta_{ab}\id + i\sum_c\epsilon_{abc} A^\dagger \sigma^c_k A = \delta_{ab}\id$. For $k \ne l$, Eq.~\eqref{eq:Tab} gives
\begin{equation}
  A^\dagger \sigma^a_k \sigma^b_l A = \frac{4 A^\dagger J_a J_b A - n\delta_{ab}\id}{n(n-1)} = \delta_{ab} \frac{\frac{4}{3} j (j+1) - n}{n(n-1)} \id,
\end{equation}
and $\frac{4}{3} j(j+1) - n = \frac{n(n+2)}{3} - n = \frac{n(n-1)}{3}$, which gives the value $\frac13$. For fixed $a$, the $n \times n$ block of $\Gamma$ is $\frac23 \id_n + \frac13 \mathbf{u}\mathbf{u}^{\mathsf T}$ with $\mathbf{u} = (1, \ldots, 1)$, whose eigenvalues are $\frac23$ and $\frac{n+2}{3}$.
\end{proof}

The theorem says that every code in $\Dloc(n)$ sees single-qubit errors in exactly the same way. Take two codes $A$ and $A'$ in $\Dloc(n)$ and match their error states, $E_\mu A e_i \mapsto E_\mu A' e_i$. By the theorem, any two error states of $A$ have the same inner product as the matched pair of error states of $A'$. A correspondence that preserves all inner products extends to a unitary map, so there is a unitary carrying the error space of $A$ onto the error space of $A'$ and sending each error state to the matched one. The switch constructed in the next section is such a unitary, chosen to depend continuously on the code and to respect the symmetries.

The same inner products determine how errors are corrected, and since they are the same for every code in $\Dloc(n)$, one recovery procedure serves all of them. Because $\Gamma$ is invertible, the $1 + 3n$ error states of each logical basis state are linearly independent, and the vectors
\begin{equation}
  w_{\eta, i} = \sum_\mu (\Gamma^{-1/2})_{\mu\eta}\, E_\mu A e_i,
\end{equation}
with $\eta \in \{0\} \cup \{(a, k)\},\ i \in \{0, 1\}$ are orthonormal since $\braket{w_{\eta,i} | w_{\eta',i'}} = \delta_{ii'} (\Gamma^{-1/2} \Gamma\, \Gamma^{-1/2})_{\eta\eta'} = \delta_{ii'}\delta_{\eta\eta'}$. For fixed $\eta$, the two vectors $w_{\eta,0}, w_{\eta,1}$ span a copy of the logical qubit, which we call the syndrome subspace with syndrome $\eta$; the error space is the orthogonal sum of these $1 + 3n$ subspaces and has dimension $2(1 + 3n)$. The standard recovery of $\mathcal{C}_A$, used in Proposition~\ref{prop:ET} and Sec.~\ref{sec:numerics}, measures which syndrome subspace the state lies in, maps that subspace back onto the code by $w_{\eta,i} \mapsto A e_i$, and reports a state outside the error space as an erasure.

\begin{remark}
\label{rem:structural}
Theorem~\ref{thm:gram} has a short explanation that also covers higher distance and other groups. The numbers entering the Gram matrix are the values of $\frac12 \mathrm{tr}(A^\dagger O A)$ on polynomials $O$ of degree at most two in the $J_a$, and by~\eqref{eq:conjrule} these values do not change when $O$ is conjugated by a transversal gate of $\BT$. The only polynomial of degree at most two that is invariant under the tetrahedral rotations is $J_x^2 + J_y^2 + J_z^2$, up to scalars, and its value $j(j+1)$ is fixed, so the Gram matrix is fixed. At distance five, the polynomials have degree up to four, and the tetrahedral group has invariants of degree three and four, namely $K_3$ and $Q_4$ of Sec.~\ref{sec:numerics}, whose values on the code can vary. A dihedral group already has the invariant $J_z^2$ of degree two, so for its codes the Gram matrix can vary at distance three; see Sec.~\ref{sec:discussion}. For any finite group of gates, the error geometry at distance $2t+1$ is fixed on the set of codes whenever its rotation image has no invariant polynomial of degree $1$ through $2t$ in the $J_a$ other than powers of $J_x^2 + J_y^2 + J_z^2$.
\end{remark}

\section{Error-transparent switching}
\label{sec:switch}

\paragraph{Switches.} We allow switching operations that treat all qubits alike and that respect the shared transversal gates.

\begin{definition}
\label{def:switch}
A switch is a continuous family of unitaries $U(\lambda)$ on $\mathcal{H}_n$, for $\lambda \in [0,1]$, with $U(0) = \id$, such that every $U(\lambda)$ commutes with all permutations of the qubits and with $D(\hat X)$, $D(\hat Z)$, and $D(\hat F)$. Given an encoder $A_0 \in \M(n)$, its trajectory is $A_\lambda = U(\lambda) A_0$.
\end{definition}

The two requirements have simple consequences. Because $U(\lambda)$ commutes with permutations, it maps $V_n$ to itself, and it is a collective operation, since the operators commuting with all qubit permutations are exactly the polynomials in the collective spin operators $J_x, J_y, J_z$~\cite{GoodmanWallach}. Because $U(\lambda)$ commutes with the transversal gates, the trajectory stays in $\M(n)$ and the logical state is never touched. Indeed, $D(\hat X) A_\lambda = U(\lambda) D(\hat X) A_0 = U(\lambda) A_0 \hat X = A_\lambda \hat X$, and similarly for $\hat Z$ and $\hat F$. An encoded state $A_0 \psi$ is therefore carried to $A_\lambda \psi$ with the same logical state $\psi$ at every time. A switch from $\mathcal{C}_{A_0}$ to $\mathcal{C}_{A_1}$ is a switch whose trajectory ends at a multiple of $A_1$ by a phase, which is a global logical phase and has no physical effect.

\paragraph{Fault model.} A single fault is a Kraus operator $E$, a linear combination of the single-qubit Pauli errors in $\Err_1$, acting at one instant $\lambda$ during the switch. Noise acting continuously during the switch lies outside this model, except at first order in its strength, where it decomposes into such single faults. A fault at time $\lambda$ turns the final encoded state into $U_{\lambda \to 1} E A_\lambda \psi$, where $U_{\lambda \to 1} = U(1) U(\lambda)^\dagger$ is the part of the switch that remains after the fault. From the point of view of the target code, the fault therefore appears as the error $U_{\lambda \to 1} E U_{\lambda \to 1}^\dagger$ acting on $A_1 \psi$.

\begin{definition}
A switch is \textbf{pointwise single-fault tolerant} if, for every $\lambda \in [0,1]$, the target code $\mathcal{C}_{A_1}$ satisfies the Knill--Laflamme conditions for the error set $U_{\lambda \to 1} \Err_1 U_{\lambda \to 1}^\dagger$. It is \textbf{single-fault tolerant} if $\mathcal{C}_{A_1}$ satisfies the Knill--Laflamme conditions for the full error set
\begin{equation}
  \Err_{\mathrm{sw}} = \{ U_{\lambda \to 1} E U_{\lambda \to 1}^\dagger \mid E \in \Err_1,\ \lambda \in [0,1] \},
  \label{eq:Esw}
\end{equation}
including the conditions that mix different fault times,
\begin{equation}
  A_1^\dagger (U_{\lambda \to 1} E U_{\lambda \to 1}^\dagger)^\dagger\, U_{\lambda' \to 1} E' U_{\lambda' \to 1}^\dagger A_1 \in \C \id,
\end{equation}
with $\lambda \ne \lambda'$. It is \textbf{error transparent} if
\begin{equation}
  U(\lambda)\, E\, A_0 = E\, A_\lambda
  \label{eq:ET}
\end{equation}
for all $E \in \Err_1,\ \lambda \in [0,1]$.
\end{definition}

Pointwise tolerance allows the recovery to depend on the time of the fault. Single-fault tolerance is the operational requirement, since in a computation the time of a fault is unknown and one recovery must work for all times. Error transparency says that the switch carries each error state of the initial code to the corresponding error state of the current code, so that a fault can be pushed through the rest of the switch without changing its form. It implies single-fault tolerance.

\begin{proposition}
\label{prop:ET}
Let the switch be error transparent. A fault $E \in \Err_1$ at any time $\lambda$ produces the final encoded map $E A_1$, the same as the fault $E$ occurring after the switch. If $[A_0] \in \Dloc(n)$, then $[A_\lambda] \in \Dloc(n)$ for all $\lambda$, the switch is single-fault tolerant, and the standard recovery of $\mathcal{C}_{A_1}$ corrects one single-qubit fault at an unknown time.
\end{proposition}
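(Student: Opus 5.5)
The plan is to derive everything from the defining identity~\eqref{eq:ET}, $U(\lambda) E A_0 = E A_\lambda$, and then to bring in Theorem~\ref{thm:gram} and the standard recovery.

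\emph{Step 1: the final map.} A fault $E$ at time $\lambda$ produces $U_{\lambda\to1} E A_\lambda$. Using~\eqref{eq:ET} at time $\lambda$ gives $E A_\lambda = U(\lambda) E A_0$, so
\[
U_{\lambda\to1} E A_\lambda = U(1) U(\lambda)^\dagger U(\lambda) E A_0 = U(1) E A_0 .
\]
Applying~\eqref{eq:ET} again at $\lambda = 1$ turns the right-hand side into $E A_1$. This holds for every $E \in \Err_1$. Since both sides are linear in $E$, it extends to arbitrary linear combinations of the Pauli errors, which is the fault model.

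\emph{Step 2: the trajectory stays in $\Dloc(n)$.} The trajectory stays in $\M(n)$ by the remark after Definition~\ref{def:switch}, and $A_\lambda$ is an encoder because $U(\lambda)$ is unitary. For $E_\mu, E_\nu \in \Err_1$, Eq.~\eqref{eq:ET} and unitarity of $U(\lambda)$ give
\[
A_\lambda^\dagger E_\mu^\dagger E_\nu A_\lambda = (U(\lambda)E_\mu A_0)^\dagger (U(\lambda) E_\nu A_0) = A_0^\dagger E_\mu^\dagger E_\nu A_0 .
\]
Because $[A_0]\in\Dloc(n)$, Lemma~\ref{lem:PIKL} and Proposition~\ref{prop:reduction} tell us that the right-hand side is a multiple of $\id$. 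Hence $A_\lambda$ satisfies the Knill--Laflamme conditions for $\Err_1$, and Proposition~\ref{prop:reduction} gives $[A_\lambda]\in\Dloc(n)$. In particular, $[A_1]\in\Dloc(n)$.

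\emph{Step 3: single-fault tolerance.} By Step 1 and linearity, each element of $\Err_{\mathrm{sw}}$ acts on the target code the same way as the corresponding element of $\Err_1$:
\[
U_{\lambda\to1} E U_{\lambda\to1}^\dagger A_1 = U_{\lambda\to1} E A_\lambda = E A_1 .
\]
The first equality uses $U_{\lambda\to1}^\dagger A_1 = U(\lambda)U(1)^\dagger U(1) A_0 = A_\lambda$. The mixed-time Knill--Laflamme expression for faults at times $\lambda$ and $\lambda'$ therefore equals $A_1^\dagger E^\dagger E' A_1$. This is a multiple of $\id$ by Theorem~\ref{thm:gram} applied to $A_1$, so the switch is single-fault tolerant.

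\emph{Step 4: recovery.} The final state is $E A_1\psi$ whatever the fault time, so the standard recovery of $\mathcal{C}_{A_1}$ built from the $w_{\eta,i}$ corrects it; this requires $[A_1]\in\Dloc(n)$, which Step 2 supplies. The recovery uses only $A_1$ and the fixed matrix $\Gamma$, so it does not depend on $\lambda$.

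The main subtlety is the bookkeeping in Step 3: one has to identify $U_{\lambda\to1}^\dagger A_1$ with $A_\lambda$. Apart from that, the argument is direct substitution.
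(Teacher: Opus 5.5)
Your proof is correct and follows the paper's argument essentially step for step: you push the fault through the rest of the switch using \eqref{eq:ET} at $\lambda$ and at $1$, transfer the Knill--Laflamme matrix (equivalently the Gram matrix $\Gamma$ of Theorem~\ref{thm:gram}) from $A_0$ to $A_\lambda$ by unitarity of $U(\lambda)$, and use $U_{\lambda\to1}^\dagger A_1 = A_\lambda$ to reduce every element of $\Err_{\mathrm{sw}}$, including the mixed-time conditions, to $A_1^\dagger E^\dagger E' A_1 \in \C\id$. You also check explicitly that $A_\lambda$ stays in $\M(n)$, so that Proposition~\ref{prop:reduction} applies, and that the recovery does not depend on $\lambda$; the paper leaves both points implicit.
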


\begin{proof}
Applying~\eqref{eq:ET} at $\lambda$ and at $1$ gives $U_{\lambda \to 1} E A_\lambda = U(1) U(\lambda)^\dagger U(\lambda) E A_0 = U(1) E A_0 = E A_1$. Since $U(\lambda)$ is unitary, the inner products among the states $E A_\lambda e_i$ equal those among the states $E A_0 e_i$, which are $\delta_{ii'} \Gamma_{\mu\nu}$ by Theorem~\ref{thm:gram}, so $\mathcal{C}_{A_\lambda}$ satisfies the Knill--Laflamme conditions. For the full error set, $A_1 = U_{\lambda \to 1} A_\lambda$ gives $U_{\lambda \to 1} E U_{\lambda \to 1}^\dagger A_1 = U_{\lambda \to 1} E A_\lambda = E A_1$ for every $\lambda$. Every element of $\Err_{\mathrm{sw}}$ therefore acts on the target code as an element of $\Err_1$, the conditions mixing different times reduce to $A_1^\dagger E^\dagger E' A_1 \in \C\id$, and these hold because $[A_1] \in \Dloc(n)$.
\end{proof}

The converse direction starts from a simple necessary condition, which holds for every switch.

\begin{proposition}
\label{prop:necessary}
If a switch is pointwise single-fault tolerant, and in particular if it is single-fault tolerant, then $[A_\lambda] \in \Dloc(n)$ for all $\lambda \in [0,1]$.
\end{proposition}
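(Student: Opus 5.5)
The plan is to conjugate the Knill--Laflamme conditions for the transformed error set back to the intermediate code $\mathcal{C}_{A_\lambda}$. The statement then reduces to Lemma~\ref{lem:PIKL} and Proposition~\ref{prop:reduction} applied to $A_\lambda$. Since single-fault tolerance requires the Knill--Laflamme conditions for $\Err_{\mathrm{sw}}$, which contains $U_{\lambda\to1}\Err_1U_{\lambda\to1}^\dagger$ for each fixed $\lambda$, it implies pointwise tolerance. So it suffices to treat the pointwise case.

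Fix $\lambda\in[0,1]$. By definition $A_1=U(1)A_0$ and $A_\lambda=U(\lambda)A_0$, so $A_1=U_{\lambda\to1}A_\lambda$ with $U_{\lambda\to1}=U(1)U(\lambda)^\dagger$ unitary. Pointwise tolerance says that for all $E,E'\in\Err_1$,
\begin{equation}
A_1^\dagger (U_{\lambda\to1}EU_{\lambda\to1}^\dagger)^\dagger U_{\lambda\to1}E'U_{\lambda\to1}^\dagger A_1 \in \C\id .
\end{equation}
Substituting $U_{\lambda\to1}^\dagger A_1=A_\lambda$ collapses the left side to $A_\lambda^\dagger E^\dagger E' A_\lambda$. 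Hence $\mathcal{C}_{A_\lambda}$ itself satisfies the Knill--Laflamme conditions for $\Err_1$, that is, it corrects arbitrary single-qubit errors.

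It remains to check that $A_\lambda$ is an encoder in $\M(n)$, so that Proposition~\ref{prop:reduction} applies. Preservation of inner products is clear because $U(\lambda)$ is unitary and $A_0$ is an encoder; here I take the initial encoder $A_0$ to be an encoder in $\M(n)$, as in Definition~\ref{def:switch}. Membership in $\M(n)$ follows from the computation already given after Definition~\ref{def:switch}: $U(\lambda)$ commutes with $D(\hat X),D(\hat Z),D(\hat F)$. Its image lies in $V_n$ because $U(\lambda)$ commutes with permutations. Lemma~\ref{lem:PIKL} together with Proposition~\ref{prop:reduction} then gives $c_1(A_\lambda)=c_2(A_\lambda)=0$, i.e.\ $[A_\lambda]\in\Dloc(n)$.

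There is no real obstacle here. The only points needing care are that the conjugated error set acting on $A_1$ is literally the original error set acting on $A_\lambda$, and that the trajectory stays in $\M(n)$. Both are immediate from the definitions.
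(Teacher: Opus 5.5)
Your proposal is correct and follows essentially the paper's own route. You substitute $A_1 = U_{\lambda\to1}A_\lambda$ to turn the conditions into $A_\lambda^\dagger E^\dagger E' A_\lambda \in \C\id$, then invoke Proposition~\ref{prop:reduction}; your check that $A_\lambda$ stays an encoder in $\M(n)$ only spells out what the paper takes from the remarks after Definition~\ref{def:switch}.
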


\begin{proof}
The Knill--Laflamme conditions for $\mathcal{C}_{A_1}$ with respect to $U_{\lambda \to 1} \Err_1 U_{\lambda \to 1}^\dagger$ read $A_1^\dagger U_{\lambda \to 1} E^\dagger E' U_{\lambda \to 1}^\dagger A_1 \in \C\id$ for $E, E' \in \Err_1$. Substituting $A_1 = U_{\lambda \to 1} A_\lambda$ turns these into $A_\lambda^\dagger E^\dagger E' A_\lambda \in \C \id$, which say that $\mathcal{C}_{A_\lambda}$ corrects single-qubit errors. Proposition~\ref{prop:reduction} then gives $[A_\lambda] \in \Dloc(n)$.
\end{proof}

Proposition~\ref{prop:necessary} says that fault tolerance of the switch is decided by the codes it passes through. Suppose that at some time $\lambda$ the code $\mathcal{C}_{A_\lambda}$ fails to correct some single-qubit error. A fault at that time is carried by the rest of the switch to the target code, and by the proof, the target code corrects the carried error exactly when $\mathcal{C}_{A_\lambda}$ corrects the original one. So this fault is uncorrectable at the end, and the switch is not fault tolerant. This explains why simple switches fail. A time-independent evolution $e^{-i\lambda H}$, with $H$ a collective Hamiltonian commuting with the transversal gates, moves the encoder along a curve in $\M(n)$ that has no reason to stay in $\Dloc(n)$. The main result of this section is that staying in $\Dloc(n)$ is also sufficient.

\begin{theorem}
\label{thm:existence}
Let $\lambda \mapsto [A_\lambda]$, $\lambda \in [0,1]$, be a continuous path in $\Dloc(n)$. There is an error-transparent switch $U$ with $U(\lambda) A_0 = A_\lambda$, where the encoders $A_\lambda$ represent the points of the path and depend continuously on $\lambda$. If the path is piecewise smooth, $U$ may be chosen piecewise smooth.
\end{theorem}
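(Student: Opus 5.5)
The plan is to build $U(\lambda)$ from two pieces, using the error geometry fixed by Theorem~\ref{thm:gram}.

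\textbf{Step 1: lift the path and define the error-space isometry.} First lift the path $[A_\lambda]$ to a continuous family of encoders $A_\lambda \in \M(n)$. Locally one can fix the phase by requiring $\braket{A_{\lambda_0}, A_\lambda} > 0$. These local choices patch together over the compact interval $[0,1]$, and in the piecewise smooth case the lift is piecewise smooth. Let $\mathcal{K}_\lambda$ be the span of the vectors $E_\mu A_\lambda e_i$, the error space of dimension $2(1+3n)$. Define $V_\lambda \colon \mathcal{K}_0 \to \mathcal{K}_\lambda$ by $E_\mu A_0 e_i \mapsto E_\mu A_\lambda e_i$. By Theorem~\ref{thm:gram}, both families have Gram matrix $\delta_{ii'}\Gamma_{\mu\nu}$, and $\Gamma$ is invertible, so $V_\lambda$ is a well-defined unitary. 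Concretely, $V_\lambda = \sum_{\eta,i} \ket{w_{\eta,i}(\lambda)}\bra{w_{\eta,i}(0)}$, and the $w$'s depend continuously (piecewise smoothly) on $\lambda$ because $\Gamma$ is constant. Any extension $U(\lambda)$ of $V_\lambda$ satisfies $U(\lambda)EA_0 = EA_\lambda$ for every $E\in\Err_1$, which is \eqref{eq:ET}, and also $U(\lambda)A_0 = A_\lambda$.

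\textbf{Step 2: check the symmetries on the error space.} For a permutation $\pi$, we have $\pi \sigma^a_k A_\lambda e_i = \sigma^a_{\pi(k)} A_\lambda e_i$. So $\pi$ permutes the error states in the same way for every $\lambda$, and therefore $\pi V_\lambda = V_\lambda \pi$ on $\mathcal{K}_0$. For $g \in \{\hat X,\hat Z,\hat F\}$, we have
\[
D(g)\sigma^a_k A_\lambda e_i = \sum_b R(g)_{ba}\sigma^b_k D(g)A_\lambda e_i = \sum_b R(g)_{ba}\sigma^b_k A_\lambda \hat g e_i ,
\]
with coefficients independent of $\lambda$, so $D(g)$ also commutes with $V_\lambda$. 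Hence $\mathcal{K}_\lambda$ is invariant under the group $G$ generated by $S_n$ and $D(\BT)$. Moreover, $\mathcal{K}_0$ and $\mathcal{K}_\lambda$ are unitarily equivalent $G$-representations, with $V_\lambda$ an intertwiner.

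\textbf{Step 3: extend to the complement (main obstacle).} We need a $G$-equivariant unitary $W_\lambda \colon \mathcal{K}_0^\perp \to \mathcal{K}_\lambda^\perp$ that is continuous in $\lambda$ and equals the identity at $\lambda=0$. Then $U(\lambda) = V_\lambda \oplus W_\lambda$. Existence at each $\lambda$ follows from character counting: $\mathcal{H}_n \ominus \mathcal{K}_\lambda$ has a character independent of $\lambda$. Continuity is the delicate part. For this I would use Kato's transport. Let $P_\lambda$ be the projection onto $\mathcal{K}_\lambda$ and $Q_\lambda = \id - P_\lambda$. Both are $G$-equivariant and piecewise smooth, since they are built from the $w$'s. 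Let $T(\lambda)$ solve
\[
\dot T = [\dot P_\lambda, P_\lambda]\,T, \qquad T(0)=\id .
\]
This $T$ is unitary, $G$-equivariant (the generator commutes with $G$), and satisfies $T(\lambda)P_0 = P_\lambda T(\lambda)$. For merely continuous paths, first approximate uniformly by piecewise smooth paths in $\Dloc(n)$, or replace the ODE by the limit of products of Cayley transforms. Alternatively, use directly the polar part of $Q_\lambda Q_0 + P_\lambda P_0$, which is unitary and equivariant whenever $\|P_\lambda - P_0\|<1$, and concatenate these over a fine subdivision of $[0,1]$.

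\textbf{Step 4: assemble.} Set $U(\lambda) = V_\lambda P_0 + T(\lambda)Q_0$. This operator is unitary, continuous (piecewise smooth when the path is), satisfies $U(0)=\id$, and commutes with $G$. It is therefore a switch in the sense of Definition~\ref{def:switch}, and by Step 1 it is error transparent with trajectory $A_\lambda$.
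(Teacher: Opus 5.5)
Your proposal is correct and follows essentially the same route as the paper: you lift the path, map each error state $E_\mu A_0 e_i$ to $E_\mu A_\lambda e_i$ using the constant Gram matrix of Theorem~\ref{thm:gram}, get equivariance from the $\lambda$-independent action of permutations and transversal gates on error states, and extend to the complement by Kato's transport. The paper's main version of that transport is the polar part of $P_{\lambda'}P_\lambda + (\id-P_{\lambda'})(\id-P_\lambda)$ over a fine subdivision, with the $[\dot P_\lambda, P_\lambda]$ generator as the differentiable alternative. For a merely continuous path, use that subdivision option rather than approximating by piecewise smooth paths, because approximating would change the path and so would not give $U(\lambda)A_0 = A_\lambda$ for the path you were given.
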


The idea of the proof is as follows. By Theorem~\ref{thm:gram}, the error states $E_\mu A_\lambda e_i$ have the same inner products for every $\lambda$, so there is a unitary taking the error states of $A_0$ to those of $A_\lambda$. We extend this unitary to the rest of $\mathcal{H}_n$ by a continuous rotation of the remaining directions, and then check that the result commutes with permutations and with the transversal gates.

\begin{proof}
Because the path is continuous and $[0,1]$ is simply connected, it lifts through the bundle of unit vectors over $\PP(\M(n))$, so we may choose encoders $A_\lambda$ representing its points that depend continuously on $\lambda$. Let $\kappa = 1 + 3n$ and let $T_\lambda$ be the linear map from $\C^\kappa \otimes W$ to $\mathcal{H}_n$ with $T_\lambda(f_\mu \otimes e_i) = E_\mu A_\lambda e_i$, where $f_\mu$ is the standard basis of $\C^\kappa$. Theorem~\ref{thm:gram} says that $T_\lambda^\dagger T_\lambda = \Gamma \otimes \id$ for every $\lambda$. Since $\Gamma$ is invertible, the map $Y_\lambda = T_\lambda (\Gamma^{-1/2} \otimes \id)$ preserves inner products, its range is the error space $\mathcal{F}_\lambda$ of $\mathcal{C}_{A_\lambda}$, and $Y_\lambda$ depends continuously on $\lambda$. The operator $Y_\lambda Y_0^\dagger$ maps $\mathcal{F}_0$ onto $\mathcal{F}_\lambda$ preserving inner products, and
\begin{equation}
  Y_\lambda Y_0^\dagger\, T_0 = Y_\lambda Y_0^\dagger Y_0 (\Gamma^{1/2}\otimes\id) = T_\lambda,
\end{equation}
which is~\eqref{eq:ET}.

To extend to the directions orthogonal to $\mathcal{F}_0$, let $P_\lambda$ be the orthogonal projection onto $\mathcal{F}_\lambda$. For two projections $P, P'$ of equal rank with $\lVert P - P' \rVert < 1$, the operator $\Omega = P'P + (\id - P')(\id - P)$ is invertible, and its unitary part $\Omega (\Omega^\dagger \Omega)^{-1/2}$ maps the range of $P$ onto the range of $P'$ and the orthogonal complement onto the orthogonal complement. Dividing $[0,1]$ into short intervals and composing these unitaries gives a continuous family of unitaries $K_\lambda$ with $K_0 = \id$ that carry the orthogonal complement of $\mathcal{F}_0$ onto that of $\mathcal{F}_\lambda$. This construction goes back to Kato~\cite{Kato1950,KatoBook}; for a differentiable path one may instead integrate the generator $i[\partial_\lambda P_\lambda, P_\lambda]$ on the complement, which avoids the subdivision. We set
\begin{equation}
  U(\lambda) = Y_\lambda Y_0^\dagger + K_\lambda (\id - P_0).
\end{equation}
This is unitary, continuous, equal to $\id$ at $\lambda = 0$, and satisfies~\eqref{eq:ET}. If the path is piecewise smooth, so are $Y_\lambda$, $P_\lambda$, and $K_\lambda$.

It remains to check the symmetries. Let $g$ be a qubit permutation or one of $D(\hat X), D(\hat Z), D(\hat F)$. Applying $g$ to an error state gives a combination of error states with coefficients independent of $\lambda$. For a permutation $\pi$ this holds because $\pi \sigma^a_l \pi^\dagger = \sigma^a_{\pi(l)}$ and $\pi A_\lambda = A_\lambda$, and for a transversal gate $D(k)$ because $D(k) \sigma^a_l D(k)^\dagger = \sum_b R(k)_{ba} \sigma^b_l$ and $D(k) A_\lambda = A_\lambda k$. There is therefore a fixed unitary $\tau(g)$ on $\C^\kappa \otimes W$ with $g T_\lambda = T_\lambda \tau(g)$ for all $\lambda$. Since $\tau(g)^\dagger (\Gamma \otimes \id) \tau(g) = T_\lambda^\dagger g^\dagger g T_\lambda = \Gamma \otimes \id$, the matrix $\Gamma \otimes \id$ commutes with $\tau(g)$, and so does $\Gamma^{-1/2} \otimes \id$. Hence $g Y_\lambda = Y_\lambda \tau(g)$, and $g$ commutes with $Y_\lambda Y_0^\dagger$ and with $P_\lambda$. The operators $\Omega$ in the construction of $K_\lambda$ are built from the projections $P_\lambda$, so they commute with $g$, and so do their unitary parts. Therefore $g$ commutes with $U(\lambda)$.
\end{proof}

Together, Propositions~\ref{prop:ET} and~\ref{prop:necessary} and Theorem~\ref{thm:existence} give a complete answer, in which pointwise and operational fault tolerance turn out to have the same existence criterion.

\begin{corollary}
\label{cor:classification}
Let $[A_0], [A_1] \in \Dloc(n)$. The following are equivalent.
\begin{enumerate}
  \item There is a single-fault-tolerant switch from $\mathcal{C}_{A_0}$ to $\mathcal{C}_{A_1}$, so that one recovery corrects a single-qubit fault at an unknown time.
  \item There is a pointwise single-fault-tolerant switch from $\mathcal{C}_{A_0}$ to $\mathcal{C}_{A_1}$.
  \item There is an error-transparent switch from $\mathcal{C}_{A_0}$ to $\mathcal{C}_{A_1}$.
  \item The points $[A_0]$ and $[A_1]$ can be joined by a continuous path inside $\Dloc(n)$.
\end{enumerate}
\end{corollary}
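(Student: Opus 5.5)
We prove the cycle of implications $1 \Rightarrow 2 \Rightarrow 4 \Rightarrow 3 \Rightarrow 1$. Each step reduces to a result already established, so the only real work is bookkeeping about what ``a switch from $\mathcal{C}_{A_0}$ to $\mathcal{C}_{A_1}$'' means.

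\emph{$1 \Rightarrow 2$.} The error set $U_{\lambda\to1}\Err_1 U_{\lambda\to1}^\dagger$ for a fixed $\lambda$ is a subset of $\Err_{\mathrm{sw}}$. The Knill--Laflamme conditions for a larger error set imply those for any subset, so single-fault tolerance implies pointwise single-fault tolerance.

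\emph{$2 \Rightarrow 4$.} Take a pointwise tolerant switch $U$ with trajectory $A_\lambda = U(\lambda)A_0$. By Definition~\ref{def:switch} the map $\lambda \mapsto A_\lambda$ is continuous, and we observed after that definition that it stays in $\M(n)$. Hence $\lambda\mapsto[A_\lambda]$ is a continuous path in $\PP(\M(n))$. Proposition~\ref{prop:necessary} shows that it lies in $\Dloc(n)$. Its endpoints are $[A_0]$ and $[A_1]$, the latter because the trajectory ends at a phase multiple of the target encoder.

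\emph{$4 \Rightarrow 3$.} Given a continuous path in $\Dloc(n)$ from $[A_0]$ to $[A_1]$, Theorem~\ref{thm:existence} supplies an error-transparent switch with $U(\lambda)A_0 = A_\lambda$. Here $A_\lambda$ is a continuous lift of the path, and it can be chosen to start at the given $A_0$. The endpoint of this lift represents $[A_1]$. Since both $A_\lambda$ and the target encoder are encoders, they differ by a unit scalar, that is, a phase. This is exactly the allowance in the definition of a switch from $\mathcal{C}_{A_0}$ to $\mathcal{C}_{A_1}$.

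\emph{$3 \Rightarrow 1$.} Apply Proposition~\ref{prop:ET}, using the hypothesis $[A_0]\in\Dloc(n)$. It gives single-fault tolerance and states that the standard recovery of $\mathcal{C}_{A_1}$ handles a fault at an unknown time. This is the parenthetical clause in item 1.

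The main point needing care is the phase bookkeeping in $4\Rightarrow 3$. The lift in Theorem~\ref{thm:existence} must start at the given $A_0$ and not at a rescaled copy of it. This can always be arranged by multiplying the entire lift by a constant phase, which preserves continuity. The endpoint then agrees with the chosen $A_1$ up to a phase, which the definition permits. Everything else is a direct citation.
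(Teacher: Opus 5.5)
Your proof is correct and follows the same cycle $1\Rightarrow 2\Rightarrow 4\Rightarrow 3\Rightarrow 1$ as the paper, with the same citations: containment of the error sets, Proposition~\ref{prop:necessary}, Theorem~\ref{thm:existence}, and Proposition~\ref{prop:ET}. Your extra phase bookkeeping for the lift in $4\Rightarrow 3$ makes explicit what the paper leaves implicit, and it does not change the argument.
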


\begin{proof}
Statement 1 implies statement 2 because each set $U_{\lambda \to 1} \Err_1 U_{\lambda \to 1}^\dagger$ is contained in $\Err_{\mathrm{sw}}$. Statement 2 implies statement 4 by Proposition~\ref{prop:necessary}, since $\lambda \mapsto [A_\lambda]$ is then a continuous path in $\Dloc(n)$. Statement 4 implies statement 3 by Theorem~\ref{thm:existence}, and statement 3 implies statement 1 by Proposition~\ref{prop:ET}.
\end{proof}

The question of fault-tolerant switching is thereby reduced to a question about the shape of the set $\Dloc(n)$, namely which of its points can be connected. In the next section, we answer this question for encoders with real coefficients. The equivalence holds within the class of Definition~\ref{def:switch}, whose unitaries commute with $D(\hat X)$, $D(\hat Z)$, and $D(\hat F)$ at every instant. Preserving the logical state only requires this at the endpoints, so a more general switch could pass through encoders outside $\M(n)$. Proposition~\ref{prop:necessary} uses no symmetry at all, so the trajectory of any pointwise single-fault-tolerant switch must consist of encoders of distance three, and Theorem~\ref{thm:existence} still applies to any path along which the error Gram matrix is constant. What is lost is Theorem~\ref{thm:gram}. If only $D(\hat X)$ and $D(\hat Z)$ are imposed, the computation of Proposition~\ref{prop:reduction} gives $A^\dagger J_a^2 A = d_a \id$ with three values $d_a$ constrained only by $d_x + d_y + d_z = j(j+1)$, since $\hat F$ no longer equates them, and the Gram matrix depends on these values. A sufficient condition for an error-transparent switch is then connectivity within a level set of $(d_x, d_y, d_z)$ in the larger space of encoders compatible with $\hat X$ and $\hat Z$; whether single-fault tolerance forces the Gram matrix to be constant in that setting is the same open question raised for higher distance in Sec.~\ref{sec:discussion}. With permutation symmetry alone, the Gram matrix depends on all of $A^\dagger \{J_a, J_b\} A$, and we have not pursued this case. We stress that Theorem~\ref{thm:existence} proves that the switching unitary exists and constructs it as an operator on $\mathcal{H}_n$. It does not decompose $U(\lambda)$ into physical control pulses or bound the resources such a decomposition requires; we return to this point in Sec.~\ref{sec:discussion}.

\section{Real encoders and connectivity}
\label{sec:real}

Write $\ket{0_L} = \sum_w a_w \ket{D_w}$ and $\ket{1_L} = \sum_w a'_w \ket{D_w}$ in the Dicke basis. An encoder is real if all of the coordinates $a_w$ and $a'_w$ are real numbers. We write $\bar A$ for the encoder with complex-conjugated coefficients.

\begin{proposition}
\label{prop:real}
If $A \in \M(n)$, then $\bar A \in \M(n)$. The same holds for the Clifford-code spaces $\M_{\rho_{1/2}}(n)$ and $\M_{\rho_{5/2}}(n)$ and the icosahedral-code spaces $\M_{\rho_{1/2}}(n)$ and $\M_{\rho'}(n)$. Consequently, each of these spaces has a basis of real encoders, and its real encoders form a real vector space of the same dimension.
\end{proposition}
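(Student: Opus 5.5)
The plan is to realize complex conjugation of Dicke coefficients as entrywise complex conjugation in the computational basis, and then show that each defining relation $D(g)A = A\rho(g)$ is carried to a relation of the same type. The Dicke states~\eqref{eq:dicke} have real coefficients in the computational basis, so $\bar A$, with conjugated Dicke coefficients, is the map whose matrix (from $e_0,e_1$ to the computational basis of $\mathcal{H}_n$) is the entrywise conjugate of that of $A$. Conjugating the matrix identity $D(g)A = A\rho(g)$ entrywise, and using $\overline{g^{\otimes n}} = \bar g^{\otimes n}$, gives
\begin{equation}
  D(\bar g)\,\bar A = \bar A\,\overline{\rho(g)} .
\end{equation}
This holds for every $g$ in the relevant group $G \in \{\BT,\BO,\BI\}$, since compatibility with a representation holds on the whole group, and not only on generators.

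The key input is that complex conjugation on $SU(2)$ is inner: $\bar g = Y g Y^\dagger$ for every $g$ of determinant one. This identity holds in the determinant-one normalization used here, which is why the hatted gates are the natural objects. The element $\hat Y = \hat Z\hat X$ up to sign lies in $\BT \subset G$, so $h \mapsto Y^\dagger h Y$ is a bijection of $G$ onto itself that preserves $\BT$. Substituting $g = Y^\dagger h Y$ turns the displayed relation into $D(h)\bar A = \bar A\,\overline{\rho(Y^\dagger h Y)}$ for all $h \in G$. It then remains to check, representation by representation, that $\overline{\rho(Y^\dagger h Y)} = \rho(h)$:
\begin{itemize}
\item For the defining representation $\rho_{1/2}$ this is $\overline{Y^\dagger h Y} = Y\,Y^\dagger h Y\,Y^\dagger = h$.
\item For $\rho_{5/2}$, the sign $\pm1$ depends only on whether the element lies in $\BT$. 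Conjugation by $Y$ preserves membership in $\BT$, and the sign is real, so the same computation applies.
\item For the Galois-conjugate $\rho'$, write $\sigma$ for the replacement $\sqrt5 \mapsto -\sqrt5$ on entries. Because $\sqrt5$ is real, $\sigma$ commutes with complex conjugation. Because $Y$ has rational (Gaussian) entries, $\sigma(YkY^\dagger) = Y\sigma(k)Y^\dagger$. Hence $\overline{\rho'(k)} = \sigma(\bar k) = Y\rho'(k)Y^\dagger$, and with $k = Y^\dagger h Y$ this equals $\rho'(h)$.
\end{itemize}
Applying the case $G = \BT$ with the defining representation gives $\bar A \in \M(n)$. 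The other cases give the statements for $\M_{\rho_{1/2}}(n)$ and $\M_{\rho_{5/2}}(n)$ for $\BO$, and for $\M_{\rho_{1/2}}(n)$ and $\M_{\rho'}(n)$ for $\BI$.

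For the final claim, note that $A \mapsto \bar A$ is a conjugate-linear involution of each complex space $\M_\rho$. Every $A$ decomposes as $A = \frac{A+\bar A}{2} + i\,\frac{A-\bar A}{2i}$ with both summands real elements of $\M_\rho$. Real encoders that are linearly independent over $\R$ stay independent over $\C$: if $\sum (s_k + i t_k) A_k = 0$ with $s_k, t_k$ real and each $A_k$ real, then taking real and imaginary parts of the coefficients gives $\sum s_k A_k = 0$ and $\sum t_k A_k = 0$. So the real subspace is a real form whose real dimension equals the complex dimension, and any real basis of it is a complex basis of $\M_\rho$.

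The main obstacle is the middle step. A naive conjugation of the generator relations fails, since $\bar F = HS \neq F$, so one has to see that conjugation acts on the gate group by the inner automorphism $g \mapsto YgY^\dagger$. One also has to verify carefully that this automorphism is compatible with the Galois-conjugate representation $\rho'$. That compatibility rests on $\sigma$ commuting with complex conjugation and on $Y$ having entries that $\sigma$ fixes.
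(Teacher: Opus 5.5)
Your proof is correct, but it is organized differently from the paper's. The paper works generator by generator. It records the explicit identities $\overline{\hat X} = -\hat X$, $\overline{\hat Z} = -\hat Z$, $\overline{\hat F} = -\hat F\hat Z$, $\overline{\hat S} = \hat S^{-1}$, and $\overline{\Phi} = -\hat Z\Phi^{-1}\hat Z$ (and likewise for $\Phi'$). It then conjugates each defining relation, uses that $n$ is odd to write $D(-g) = -D(g)$, and cancels the leftover factors of $\hat Z$ with the relation $D(\hat Z)\bar A = \bar A\hat Z$ it has already established.

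You instead use one structural fact: complex conjugation on $SU(2)$ is the inner automorphism $g \mapsto YgY^\dagger$, and $\hat Y = \hat Z\hat X \in \BT$. You apply it to the whole group at once, so all that remains is to check that each representation intertwines this automorphism with complex conjugation. For $\rho'$, this reduces to the Galois map $\sigma$ commuting with complex conjugation and fixing $Y$.

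Your route has several advantages. It explains where the paper's identities come from: those for $\hat X$, $\hat Z$, $\hat F$, $\hat S$ are exactly $\bar g = YgY^\dagger$ rewritten. It treats all four representations uniformly. It turns the Galois-conjugate case into a one-line argument, whereas the paper's ``in the same way'' leaves the separate check $\overline{\Phi'} = -\hat Z\Phi'^{-1}\hat Z$ to the reader. And it never needs the parity of $n$, because you apply the compatibility relation to the group element $Y^\dagger hY$ (for instance $-\hat X$) rather than to $h$, so no sign ever has to be pulled out of $D$.

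The paper's route is more elementary, checkable with a few $2\times 2$ multiplications, and manipulates only the defining generator relations. You also supply the argument that real elements independent over $\R$ stay independent over $\C$, which the paper leaves implicit in its final sentence.
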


\begin{proof}
Complex conjugation sends each generator to a product of generators. A direct computation gives
\begin{equation}
\begin{gathered}
  \overline{\hat X} = -\hat X, \quad \overline{\hat Z} = -\hat Z, \quad \overline{\hat F} = -\hat F \hat Z, \\
  \overline{\hat S} = \hat S^{-1}, \quad \overline{\Phi} = -\hat Z \Phi^{-1} \hat Z,
\end{gathered}
\end{equation}
and the last identity also holds with $\Phi'$ in place of $\Phi$. Since $n$ is odd, $D(-g) = -D(g)$. Conjugating $D(\hat X) A = A \hat X$ gives $-D(\hat X)\bar A = -\bar A \hat X$, so $D(\hat X) \bar A = \bar A \hat X$, and the same for $\hat Z$. Conjugating $D(\hat F) A = A \hat F$ gives $D(\hat F) D(\hat Z) \bar A = \bar A \hat F \hat Z$, and since $D(\hat Z) \bar A = \bar A \hat Z$ this becomes $D(\hat F) \bar A = \bar A \hat F$. Thus $\bar A \in \M(n)$. The relations for $\hat S$ and $\Phi$ pass to $\bar A$ in the same way. Every encoder $A$ in one of these spaces is then the combination $A = \mathrm{Re}\, A + i\, \mathrm{Im}\, A$ of the real encoders $\mathrm{Re}\,A = (A + \bar A)/2$ and $\mathrm{Im}\,A = (A - \bar A)/(2i)$ in the same space.
\end{proof}

For real encoders one of the two conditions of Proposition~\ref{prop:reduction} holds automatically.

\begin{theorem}
\label{thm:quadric}
For every real encoder $A \in \M(n)$ we have $c_2(A) = 0$, and $c_1$ restricts to a real quadratic form $q_n$ on the real encoders. The real distance-three codes are therefore the zero set
\begin{equation}
  \mathcal{Q}_n = \{[A] \mid A \in \M(n) \text{ real},\ q_n(A) = 0\} \subset \Dloc(n),
\end{equation}
where a real encoder determines a point of the real projective space of real encoders, and we identify $\mathcal{Q}_n$ with its image in $\Dloc(n)$ under the inclusion of real encoders into $\M(n)$.
Suppose $q_n$ is nondegenerate with $p$ positive and $m - p$ negative eigenvalues, where $1 \le p \le m - 1$. Then $\mathcal{Q}_n$ is connected if and only if $m \ge 3$.
\end{theorem}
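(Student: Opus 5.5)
The plan has three steps: show that $c_2$ vanishes on real encoders by a reality argument in the Dicke basis, identify $c_1$ as a real quadratic form, and then settle connectivity by diagonalizing $q_n$ and describing its projective zero set as a quotient of a product of spheres.

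\emph{Step 1: $c_2(A)=0$ for real $A$.} I would work in the Dicke basis~\eqref{eq:dicke}. Write $J_\pm = J_x \pm iJ_y$. The ladder operators $J_\pm$ have real nonnegative matrix elements in this basis. Hence $J_x = (J_+ + J_-)/2$ is real symmetric. Also $J_y = (J_+ - J_-)/(2i)$ is $i$ times a real antisymmetric matrix. Then $\{J_x, J_y\}$ is $i$ times a real matrix, and since it is Hermitian, that real matrix is antisymmetric. For a real vector $\ket{0_L} = \sum_w a_w \ket{D_w}$, the expectation of $i$ times a real antisymmetric matrix vanishes. So $c_2(A) = 0$. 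It follows from Proposition~\ref{prop:reduction} that the real distance-three codes are exactly those with $c_1(A)=0$.

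\emph{Step 2: $c_1$ as a real quadratic form.} The map $A \mapsto A e_0$ is real-linear from the real encoders into $\R^{n+1}$. $J_z$ is real diagonal with entries $j-w$. Therefore $q_n(A) = \sum_w (j-w) a_w^2$ is a real quadratic form on the real vector space of real encoders. By Proposition~\ref{prop:real}, this space has dimension $m$. This establishes the description of $\mathcal{Q}_n$.

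\emph{Step 3: connectivity.} Assume $q_n$ is nondegenerate with signature $(p, m-p)$ and $1 \le p \le m-1$. Choose real coordinates in which $q_n = |x|^2 - |y|^2$ with $x \in \R^p$ and $y \in \R^{m-p}$. Every point of the real projective space has a representative on the unit sphere, unique up to sign. The zero set on the unit sphere is $\{|x| = |y| = 1/\sqrt2\} \cong S^{p-1} \times S^{m-p-1}$. Hence $\mathcal{Q}_n$ is the quotient of this product by the antipodal map $(x,y) \mapsto (-x,-y)$.

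If $m \ge 3$, then $p \ge 2$ or $m-p \ge 2$; by symmetry say $p \ge 2$. If also $m-p \ge 2$, the product is already connected, hence so is its quotient. If $m-p = 1$, the product is $S^{p-1} \times \{\pm 1\}$, which has two connected components. The antipodal map exchanges these two components, so the quotient is a single copy of $S^{p-1}$, which is connected.

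If $m = 2$, then $p = 1$ and the product is $S^0 \times S^0$, which consists of four points. The antipodal map pairs these into two points, so $\mathcal{Q}_n$ is two points and is disconnected.

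\emph{Main obstacle.} The only step needing care is the antipodal-identification case $m-p = 1$ (or $p = 1$). There one must observe that the quotient by $\pm 1$ glues the two sheets of the cone together rather than leaving them separate. The reality argument in Step 1 is routine once the Dicke-basis matrix elements are written out.
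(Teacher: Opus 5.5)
Your proposal is correct and follows the paper's proof essentially step for step. The paper likewise kills $c_2$ by observing that $\{J_x, J_y\}$ has purely imaginary entries; it does this in the computational basis, while your version in the Dicke basis, as $i$ times a real antisymmetric matrix, is the same observation. It then reads off $q_n$ from the real diagonal $J_z$ and decides connectivity from the quotient of $S^{p-1}\times S^{m-p-1}$ by $x\mapsto -x$, with the same case analysis, including the gluing of the two sheets when $p=1$ or $m-p=1$.
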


\begin{proof}
In the computational basis, $J_x$ has real entries and $J_y$ has imaginary entries, so $\{J_x, J_y\}$ has purely imaginary entries. For real $\ket{0_L}$, the number $c_2(A) = \braket{0_L | \{J_x, J_y\} | 0_L}$ is then both real (by hermiticity) and imaginary, hence zero. The form $c_1(A) = \braket{0_L | J_z | 0_L}$ has a real symmetric matrix on real encoders because $J_z$ is real and diagonal.

For connectivity, choose real coordinates $x = (u, v) \in \R^p \times \R^{m-p}$ in which $q_n = \lVert u \rVert^2 - \lVert v \rVert^2$, and normalize $\lVert x \rVert = 1$. The zero set is then $\lVert u \rVert^2 = \lVert v \rVert^2 = \frac12$, a product of a sphere in $\R^p$ and a sphere in $\R^{m-p}$, and $\mathcal{Q}_n$ is obtained by identifying $x$ with $-x$. A sphere in $\R^d$ is connected for $d \ge 2$, while for $d = 1$ it is the pair of points $\pm 1$. If $p \ge 2$ and $m - p \ge 2$, the product is connected. If exactly one of them equals $1$, say $p = 1$, the zero set consists of two copies of a connected sphere, $u = \pm\frac{1}{\sqrt2}$, and the identification of $x$ with $-x$ maps one copy onto the other, so $\mathcal{Q}_n$ is connected. If $p = m - p = 1$, the zero set has four points and $\mathcal{Q}_n$ has two.
\end{proof}

For $m = 3$ and $q_n$ with two positive eigenvalues and one negative, the zero set is $u_1^2 + u_2^2 = v^2$, and after identifying $x$ with $-x$ it is a single circle, which is the situation on thirteen qubits. The full set $\Dloc(n)$ of complex encoders is larger than $\mathcal{Q}_n$, but the real encoders suffice for switching, as the next two corollaries show.

\begin{corollary}
\label{cor:switch}
Suppose $m \ge 3$ and $q_n$ is nondegenerate with eigenvalues of both signs. Any two real distance-three codes in $\M(n)$, and in particular any real Clifford code and any real icosahedral code of distance three on $n$ qubits, are joined by an error-transparent switch.
\end{corollary}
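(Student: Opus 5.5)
The plan is to combine Theorem~\ref{thm:quadric} with Corollary~\ref{cor:classification}. By Theorem~\ref{thm:quadric}, every real encoder satisfies $c_2 = 0$. So a real encoder $A \in \M(n)$ has distance at least three exactly when $q_n(A) = c_1(A) = 0$. The real distance-three codes are therefore precisely the points of $\mathcal{Q}_n$. Under the hypotheses ($q_n$ nondegenerate, eigenvalues of both signs, so $1 \le p \le m-1$, and $m \ge 3$), the same theorem says $\mathcal{Q}_n$ is connected.

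The main point needing care is that the corollary asks for a \emph{continuous path}, while Theorem~\ref{thm:quadric} only gives connectedness. I will get path-connectedness directly from the proof's description of $\mathcal{Q}_n$. In normalized coordinates, the zero set is $S^{p-1}\times S^{m-p-1}$ with radii $1/\sqrt2$, and $\mathcal{Q}_n$ is its quotient under $x \mapsto -x$.

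There are two cases.
\begin{itemize}
\item If both $p$ and $m-p$ are at least $2$, then both factors are path-connected spheres. Their product is path-connected, and so is its continuous image $\mathcal{Q}_n$.
\item If one of them equals $1$, the zero set is two path-connected copies of a sphere, exchanged by $x\mapsto -x$. The image of either copy is already all of $\mathcal{Q}_n$, so $\mathcal{Q}_n$ is again path-connected.
\end{itemize}

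Next I transport the path into $\Dloc(n)$. A path in the real projective space of real encoders maps to a continuous path in $\PP(\M(n))$ under the inclusion of real encoders into $\M(n)$. The inclusion $\R^m \to \C^m = \M(n)$ is linear and injective, so it induces a continuous map on projective spaces. By Theorem~\ref{thm:quadric} the image of the path lies in $\Dloc(n)$. If I want smoothness for Theorem~\ref{thm:existence}, I can take the path piecewise great-circle in each sphere factor, though continuity alone suffices.

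Now take any two real distance-three codes $[A_0], [A_1]$. They lie in $\mathcal{Q}_n$, so they are joined by a continuous path in $\Dloc(n)$. Statement~4 of Corollary~\ref{cor:classification} then holds, and statement~3 gives the error-transparent switch. Equivalently, one can apply Theorem~\ref{thm:existence} directly, which also supplies $U(1)A_0 = A_1$ up to a phase.

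For the ``in particular'' clause, I need real Clifford and icosahedral codes to be among these points. By Proposition~\ref{prop:real}, these code spaces have real bases, so real encoders of each type exist whenever the spaces are nonzero. By Corollary~\ref{cor:larger}, such an encoder has distance three exactly when $c_1 = 0$. For the $\Phi'$ icosahedral codes this holds automatically. Hence a real Clifford or icosahedral code of distance three is a point of $\mathcal{Q}_n$, and the previous paragraph applies.

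I expect no serious obstacle here. The only subtle step is upgrading connectedness to path-connectedness and making sure the real path really is a path in $\Dloc(n)$ and not merely in the real quotient.
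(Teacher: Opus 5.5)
Your proposal is correct and follows the paper's proof, which places both codes in the set $\mathcal{Q}_n$, connected by Theorem~\ref{thm:quadric}, and then invokes Corollary~\ref{cor:classification}. Your explicit upgrade from connectedness to path-connectedness via the sphere-product description, and your check that the inclusion of real encoders yields a continuous path in $\Dloc(n)$, fill in steps the paper leaves implicit.
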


\begin{proof}
Both codes lie in $\mathcal{Q}_n$, which is connected by Theorem~\ref{thm:quadric}, and Corollary~\ref{cor:classification} provides the switch.
\end{proof}

\begin{corollary}
\label{cor:complex}
Let $\rho$ be one of the two Clifford or two icosahedral choices, and let $\mathcal{Z}_\rho(n)$ be the set of distance-three codes in $\M_\rho(n)$. If $\mathcal{Z}_\rho(n)$ is nonempty, it is connected and contains a real code. Consequently, under the hypotheses of Corollary~\ref{cor:switch}, any two distance-three Clifford or icosahedral codes, real or complex, are joined by an error-transparent switch.
\end{corollary}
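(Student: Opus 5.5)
We prove that $\mathcal{Z}_\rho(n)$ is connected and contains a real code by working inside the linear space $\M_\rho(n)$, using that its distance-three condition is a single Hermitian quadratic equation. The first step is to record, from Corollary~\ref{cor:larger}, what distance three means in $\M_\rho(n)$. For the two Clifford choices and for the icosahedral choice implementing $\Phi$, every encoder has $c_2 = 0$. So $\mathcal{Z}_\rho(n)$ is the zero set in $\PP(\M_\rho(n))$ of the single function $c_1(A) = \braket{Ae_0|J_z|Ae_0}$. This is a Hermitian form $h(A) = A^\dagger$-type expression, namely $h(x) = x^\dagger H x$ for a Hermitian matrix $H$ in any basis of $\M_\rho(n)$. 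For the choice implementing $\Phi'$, $c_1 = c_2 = 0$ identically, so $\mathcal{Z}_\rho(n) = \PP(\M_\rho(n))$. That set is connected, and it contains a real point by Proposition~\ref{prop:real} whenever it is nonempty. The remaining work is the Hermitian-form case.

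Next, by Proposition~\ref{prop:real}, we choose a real basis of $\M_\rho(n)$. In such a basis $H$ is a real symmetric matrix, since $J_z$ is real and diagonal in the Dicke basis. We then diagonalize $H$ by a real orthogonal change of basis, so that $h(x) = \sum_k \lambda_k |x_k|^2$ with real $\lambda_k$ and real basis encoders. We split the indices into $P$ ($\lambda_k>0$), $N$ ($\lambda_k<0$) and $Z$ ($\lambda_k=0$).

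The connectivity of the zero set $\{x \ne 0 : h(x)=0\}$ in $\C^d$ is where the complex structure helps. Unlike the real case, there is no sign-splitting into two components. Write a zero $x$ as $(x_P, x_N, x_Z)$, with $\sum_P \lambda_k|x_k|^2 = \sum_N |\lambda_k||x_k|^2 = t \ge 0$. The set of such $x$ with fixed $t>0$ is a product of two ellipsoidal spheres in $\C^{|P|}$ and $\C^{|N|}$ (each of odd real dimension $\ge 1$, hence connected) times $\C^{|Z|}$. For $t=0$ the set is $\C^{|Z|}\setminus\{0\}$. Every point can be joined to one with $t>0$, or lies in a connected family where $t$ is scaled, as long as $P$ and $N$ are both nonempty. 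We therefore argue as follows.
\begin{enumerate}
  \item If $P$ and $N$ are both nonempty, the nonzero zero set is connected. The $t>0$ part is connected. It contains $t=0$ points in its closure by letting $x_P,x_N\to 0$ with $x_Z \ne 0$ fixed, and the $t=0$ part is connected whenever $|Z|\ge 1$, since $\C^{|Z|}\setminus\{0\}$ is connected.
  \item If one of $P$, $N$ is empty, the zero set is $\C^{|Z|}\setminus\{0\}$. This set is connected, and it is empty exactly when $\mathcal{Z}_\rho(n)$ is empty.
\end{enumerate}
Projecting to $\PP(\M_\rho(n))$ preserves connectedness. Real points exist in each case: take $x$ with one real coordinate in $P$ and one in $N$, scaled so that $h(x)=0$, or take a real vector in the $Z$-block. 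These are real encoders because the basis is real.

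For the final statement, under the hypotheses of Corollary~\ref{cor:switch}, suppose we are given two distance-three codes, each Clifford or icosahedral. Each lies in a connected set $\mathcal{Z}_\rho(n) \subset \Dloc(n)$ that contains a real code. Path-connectedness holds for these semialgebraic sets, or directly from the explicit description above, so each code can be joined within $\Dloc(n)$ to a real distance-three code. Corollary~\ref{cor:switch} and Theorem~\ref{thm:quadric} then join the two real codes inside $\mathcal{Q}_n \subset \Dloc(n)$. Concatenating these paths gives a continuous path in $\Dloc(n)$, and Corollary~\ref{cor:classification} converts it into an error-transparent switch.

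The main obstacle is the careful connectivity argument in item 1, in particular the degenerate cases with $Z$ nonempty. It also requires checking that connectedness, rather than merely path-connectedness, suffices for Corollary~\ref{cor:classification}. I would handle both by writing explicit piecewise-linear or trigonometric paths in the diagonal coordinates.
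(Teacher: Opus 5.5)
Your proposal is correct and follows essentially the same route as the paper: you take a real basis of $\M_\rho(n)$ from Proposition~\ref{prop:real} and diagonalize the real symmetric matrix of $c_1$ into positive, negative and zero blocks. You then use the fact that complex unit spheres are connected, even in one coordinate, to connect the zero set through the parameter $t$, and you get real zeros from one real coordinate in each of the positive and negative blocks, or from the zero block. The differences are cosmetic: the paper normalizes and slides $t$ to $\tfrac12$, whereas you work in $\C^d\setminus\{0\}$ and attach the $t=0$ points by a closure argument, and your remark that these sets are path-connected makes explicit a point the paper leaves implicit.
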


\begin{proof}
By Corollary~\ref{cor:larger}, $\mathcal{Z}_\rho(n)$ is the set where $c_1 = 0$, and it is everything for $\rho'$. By Proposition~\ref{prop:real}, we may use a basis of real encoders, in which $c_1(x) = x^\dagger C x$ for a real symmetric matrix $C$ and complex coordinates $x$. After a real change of coordinates, $c_1(x) = \lVert x_+ \rVert^2 - \lVert x_- \rVert^2$, where $x = (x_+, x_-, x_0)$ is split according to the positive, negative, and zero eigenvalues of $C$. If $x_+$ or $x_-$ is absent, the normalized zero set is the unit sphere in the $x_0$ coordinates, which is connected. Otherwise, it consists of the vectors with $\lVert x_+ \rVert^2 = \lVert x_- \rVert^2 = t$ and $\lVert x_0 \rVert^2 = 1 - 2t$ for $t \in [0, \frac12]$. For each $t$ these form a product of connected spheres, and moving $t$ continuously to $\frac12$ joins every point to the connected set at $t = \frac12$. In both cases, the zero set is connected, and it contains real vectors whenever it is nonempty because $C$ is real. Each complex code in $\mathcal{Z}_\rho(n) \subset \Dloc(n)$ is therefore joined within $\Dloc(n)$ to a real code, and Corollary~\ref{cor:switch} joins the real codes.
\end{proof}

A path inside $\mathcal{Z}_\rho(n)$ consists of codes compatible with all Clifford gates, or with all icosahedral gates. The proof of Theorem~\ref{thm:existence} applies without change to such a path, so along it, the switch may be chosen to commute with all of these transversal gates. When $m = 2$, the situation is rigid, and the complex distance-three codes coincide with the real ones.

\begin{proposition}
\label{prop:rigid}
Suppose $m = 2$ and that neither $c_2$ nor $q_n$ vanishes identically. Then $\Dloc(n) = \mathcal{Q}_n$, and $\Dloc(n)$ consists of at most two points.
\end{proposition}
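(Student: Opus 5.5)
The plan is to work in coordinates adapted to the real structure of Proposition~\ref{prop:real}. Since $m = 2$, that proposition gives a basis $A_1, A_2$ of $\M(n)$ consisting of real encoders. We may take them orthonormal, because real combinations of real encoders are real and Gram--Schmidt stays real. Every $A \in \M(n)$ is then $A = z_1 A_1 + z_2 A_2$ with $x = (z_1, z_2) \in \C^2$. Both $c_1$ and $c_2$ are Hermitian forms in $x$, namely $c_1(A) = x^\dagger C x$ and $c_2(A) = x^\dagger G x$. Here $C_{rs} = \braket{A_r e_0 | J_z | A_s e_0}$ and $G_{rs} = \braket{A_r e_0 | \{J_x, J_y\} | A_s e_0}$.

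The first step is to pin down the shape of these $2\times 2$ matrices, using the reality facts from the proof of Theorem~\ref{thm:quadric}. Since $J_z$ is real and the $A_r e_0$ are real vectors, $C$ is a real symmetric matrix, and its restriction to real $x$ is $q_n$. Since $\{J_x,J_y\}$ has purely imaginary entries, $G$ is Hermitian with purely imaginary entries. It is therefore antisymmetric with zero diagonal, $G = g\,\sigma^y$ for a real number $g$. The hypothesis that $c_2$ does not vanish identically forces $g \neq 0$. The resulting formula is
\begin{equation}
  c_2(A) = 2g\,\mathrm{Im}(\bar z_1 z_2).
\end{equation}

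The second step solves $c_2 = 0$. The condition $\mathrm{Im}(\bar z_1 z_2) = 0$ says that $\bar z_1 z_2$ is real. For nonzero $x$ this means $z_1$ and $z_2$ have phases differing by $0$ or $\pi$, or one of them vanishes. In every case $x = e^{i\theta} y$ with $y \in \R^2$ nonzero. Hence every point of $\Dloc(n)$ is represented by a real encoder $y_1A_1 + y_2A_2$. For such an encoder $c_1(A) = y^{\mathsf T} C y = q_n(y)$, because the phase cancels in the Hermitian form. So $\Dloc(n) \subseteq \mathcal{Q}_n$, and the reverse inclusion is part of Theorem~\ref{thm:quadric}. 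This gives $\Dloc(n) = \mathcal{Q}_n$.

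The last step counts points. $\mathcal{Q}_n$ is the zero set in $\R\PP^1$ of the binary quadratic form $q_n(y) = C_{11}y_1^2 + 2C_{12}y_1y_2 + C_{22}y_2^2$, which is not identically zero by hypothesis. Dehomogenizing shows that such a form has at most two projective zeros. If $C_{22} \ne 0$, the point $[0:1]$ is not a zero, and the others are roots of a nonzero quadratic in $y_2/y_1$. If $C_{22} = 0$, the zeros are $[0:1]$ and at most one further root of a linear equation.

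I expect no real obstacle. The only points needing care are the identification $G = g\sigma^y$, which relies on the purely imaginary entries of $\{J_x,J_y\}$ and the real basis, and the remark that the phase freedom $e^{i\theta}$ is exactly the projective identification. Together these make the map from real points to complex points a bijection onto $\Dloc(n)$.
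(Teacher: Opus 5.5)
Your proposal is correct and follows the paper's proof essentially step for step. Both arguments use a real basis of $\M(n)$ and show that the matrix of $\{J_x,J_y\}$ in it is a real multiple of $\sigma^y$, which gives $c_2 = 2g\,\mathrm{Im}(\bar z_1 z_2)$; this forces every distance-three code to be a complex multiple of a real one, and the count follows from the zeros of a nonzero binary quadratic form. Your explicit case analysis of the projective zeros and your use of Theorem~\ref{thm:quadric} for the reverse inclusion fill in details the paper leaves implicit.
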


\begin{proof}
Choose a basis $B_1, B_2$ of real encoders and write an encoder as $A = x_1 B_1 + x_2 B_2$ with $x \in \C^2$. Then $c_2(A) = x^\dagger H_2 x$, where $(H_2)_{kl} = \braket{B_k e_0 | \{J_x, J_y\} | B_l e_0}$ is purely imaginary because $B_k e_0$ is real and $\{J_x, J_y\}$ has purely imaginary entries. A Hermitian $2 \times 2$ matrix with purely imaginary entries is a real multiple of $\sigma^y$, so $H_2 = h\sigma^y$ with $h \ne 0$, and $c_2(A) = 2h\, \mathrm{Im}(\bar x_1 x_2)$. Hence $c_2(A) = 0$ exactly when $\bar x_1 x_2$ is real, which for nonzero $x$ means that $x$ is a complex multiple of a real vector. Every distance-three code is therefore real, and the real ones are the zeros of the nonzero quadratic form $q_n$ in two real variables, of which there are at most two up to scaling.
\end{proof}

The two-dimensional cases of the survey are $n = 7$, $11$, and $15$. In each of them, both forms are nonzero. For $n = 11$, the form $q_n$ has two negative eigenvalues, so $\Dloc(11)$ is empty, and for $n = 7$ and $n = 15$, it has one of each sign, so $\Dloc(n)$ consists of two points. By Corollary~\ref{cor:classification}, no single-fault-tolerant switch in the class of Definition~\ref{def:switch} connects distinct distance-three codes in these cases. For $n = 7$, there is no Clifford code of distance three, as Table~\ref{tab:survey} shows, so the question of switching between Clifford and icosahedral codes does not arise. The two points of $\Dloc(7)$ are the icosahedral code of Pollatsek and Ruskai~\cite{PollatsekRuskai2004} and the encoder $D(\hat S) A \hat S^\dagger$ obtained from it, which is compatible with the conjugated group $\hat S\, \BI\, \hat S^\dagger$, and the result says that these two codes are not joined by any fault-tolerant switch that commutes with the transversal gates throughout. Imposing only $D(\hat X)$ and $D(\hat Z)$ along the path does not help. We computed the set of distance-three codes in the four-dimensional space of seven-qubit encoders compatible with $\hat X$ and $\hat Z$ alone, and it again consists of exactly these two isolated points.

\begin{table*}[t]
\caption{Survey for odd $n \le 29$. The column $m$ gives the dimension of $\M(n)$, and $\mathrm{sig}(q_n)$ gives the numbers of positive and negative eigenvalues of the quadratic form $q_n$ of Theorem~\ref{thm:quadric}. For each of the four subspaces $\M_\rho(n)$ we list its dimension and, in parentheses, the numbers of positive and negative eigenvalues of $q_n$ restricted to it. A real distance-three code exists in $\M_\rho(n)$ exactly when these are not all of one sign, which includes the case of $\M_{\rho'}(n)$, where $q_n$ vanishes by Corollary~\ref{cor:larger}. In every row the two entries of $\mathrm{sig}(q_n)$ sum to $m$, so $q_n$ is nondegenerate, as Theorem~\ref{thm:quadric} requires. The two Clifford columns sum to $\mathrm{sig}(q_n)$ in every row, because $D(\hat S)$ commutes with $J_z$ and so the subspaces $\M_{\rho_{1/2}}(n)$ and $\M_{\rho_{5/2}}(n)$, which together span $\M(n)$, are orthogonal for $c_1$. The last column indicates whether a real Clifford code and a real icosahedral code of distance three exist and are joined by an error-transparent switch.}
\label{tab:survey}
\centering
\begin{tabular}{rrcccccc}
\toprule
$n$ & $m$ & $\mathrm{sig}(q_n)$ & $\BO,\ \rho_{1/2}$ & $\BO,\ \rho_{5/2}$ & $\BI,\ \rho_{1/2}$ & $\BI,\ \rho'$ & switch \\
\midrule
5  & 1 & (0,1) & 0          & 1\ (0,1)  & 0         & 0 & no \\
7  & 2 & (1,1) & 1\ (0,1)   & 1\ (1,0)  & 0         & 1 & no \\
9  & 1 & (1,0) & 1\ (1,0)   & 0         & 0         & 0 & no \\
11 & 2 & (0,2) & 1\ (0,1)   & 1\ (0,1)  & 1\ (0,1)  & 0 & no \\
13 & 3 & (2,1) & 1\ (1,0)   & 2\ (1,1)  & 1\ (1,0)  & 1 & yes \\
15 & 2 & (1,1) & 1\ (0,1)   & 1\ (1,0)  & 0         & 0 & no \\
17 & 3 & (1,2) & 2\ (1,1)   & 1\ (0,1)  & 0         & 1 & yes \\
19 & 4 & (2,2) & 2\ (1,1)   & 2\ (1,1)  & 1\ (0,1)  & 1 & yes \\
21 & 3 & (2,1) & 1\ (1,0)   & 2\ (1,1)  & 1\ (1,0)  & 0 & no \\
23 & 4 & (1,3) & 2\ (0,2)   & 2\ (1,1)  & 1\ (0,1)  & 1 & yes \\
25 & 5 & (3,2) & 3\ (2,1)   & 2\ (1,1)  & 1\ (1,0)  & 1 & yes \\
27 & 4 & (2,2) & 2\ (1,1)   & 2\ (1,1)  & 0         & 1 & yes \\
29 & 5 & (2,3) & 2\ (1,1)   & 3\ (1,2)  & 1\ (0,1)  & 1 & yes \\
\bottomrule
\end{tabular}
\end{table*}

Table~\ref{tab:survey} summarizes the survey. The smallest qubit number admitting an error-transparent switch between a Clifford code and an icosahedral code is $n = 13$, and such switches exist for every odd $n$ between $13$ and $29$ except $n = 15$ and $n = 21$, where no icosahedral code of distance three exists. At $n = 15$, there is no icosahedral encoder at all. At $n = 7$, the icosahedral code exists but is isolated.

\section{Thirteen qubits}
\label{sec:thirteen}

We now carry out the construction explicitly for $n = 13$, the smallest case in Table~\ref{tab:survey}.

\paragraph{The space of encoders.} It is convenient to remove the phases from~\eqref{eq:Mdef}. For a gate $g$, with $\sqrt{\det g}$ the principal root, we have $D(\hat g) = (\sqrt{\det g})^{-13} D(g)$ and $A \hat g = (\sqrt{\det g})^{-1} A g$, so $D(\hat g) A = A \hat g$ is equivalent to $D(g) A = (\det g)^{6} A g$. With $\det X = \det Z = -1$, and $\det F = i$, the space $\M(13)$ consists of the maps with
\begin{equation}
  D(X) A = A X, \qquad D(Z) A = A Z, \qquad D(F) A = -A F .
  \label{eq:frame13}
\end{equation}
Write $\ket{0_L} = A e_0 = \sum_w a_w \ket{D_w}$. The first relation says $\ket{1_L} = D(X)\ket{0_L}$, which replaces $w$ by $13 - w$. The second relation says $D(Z)\ket{0_L} = \ket{0_L}$, and since $D(Z)\ket{D_w} = (-1)^w \ket{D_w}$, it forces $a_w = 0$ for odd $w$.

For the third relation we need $D(F) = D(H) D(S^\dagger)$ on Dicke states. The phase part is simple, $D(S^\dagger)\ket{D_w} = (-i)^w \ket{D_w}$. For the Hadamard part, $H^{\otimes 13}\ket{x} = 2^{-13/2} \sum_y (-1)^{x \cdot y} \ket{y}$, where $x \cdot y$ counts the positions in which both strings have a one. Fix $y$ with $v$ ones and sum the signs over all $x$ with $w$ ones. If $i$ of the ones of $x$ fall on ones of $y$, there are $\binom{v}{i}\binom{13-v}{w-i}$ such strings, each with sign $(-1)^i$, so the sum is the Krawtchouk polynomial~\cite{macwilliams1977theory}
\begin{equation}
  K_w(v) = \sum_i (-1)^i \binom{v}{i}\binom{13-v}{w-i}.
\end{equation}
Collecting the strings $y$ of weight $v$ into Dicke states gives
\begin{equation}
  H^{\otimes 13} \ket{D_w} = 2^{-13/2} \sum_v K_w(v) \sqrt{\binom{13}{v} \Big/ \binom{13}{w}}\, \ket{D_v}.
\end{equation}
Set $b_w = a_w / \binom{13}{w}^{1/2}$. Using $F e_0 = (e_0 + e_1)/\sqrt2$ and comparing the coefficients of $\ket{D_v}$ for even $v$ on both sides of $D(F)\ket{0_L} = -(\ket{0_L} + \ket{1_L})/\sqrt2$, the third relation becomes the integer eigenvalue problem
\begin{equation}
  \sum_{w \text{ even}} (-1)^{w/2} K_w(v)\, b_w = -64\, b_v,
\end{equation}
for $v \in \{0, 2, \ldots, 12\}$. The coefficients with odd $v$ give the same equations again, because $K_w(13 - v) = (-1)^w K_w(v) = K_w(v)$ for even $w$. The solution space is three-dimensional, in agreement with Table~\ref{tab:survey}, and in the coordinates $(b_0, b_2, \ldots, b_{12})$ it is spanned by
\begin{equation}
\begin{aligned}
  p &= (-165, 0, 6, 0, 5, 0, 0), \\
  q &= (10, 0, -1, 0, 0, 0, 5), \\
  r &= (0, 11, 0, -1, 0, 3, 0).
\end{aligned}
\end{equation}
The real encoders are the combinations $b = \alpha p + \beta q + \gamma r$ with real $\alpha, \beta, \gamma$.

\paragraph{The distance-three circle.} Since $J_z \ket{D_w} = (\frac{13}{2} - w)\ket{D_w}$, we have $c_1 = \sum_w a_w^2 (\frac{13}{2} - w)$, and $c_2$ is computed from the matrix elements of $\{J_x, J_y\}$ between Dicke states. For complex $\alpha, \beta, \gamma$ we find
\begin{align}
  c_1 &\propto 1485\,\lvert\alpha\rvert^2 - 330\,\mathrm{Re}(\bar\alpha\beta) + 5\,\lvert\beta\rvert^2 + 264\,\lvert\gamma\rvert^2, \label{eq:conic}\\
  c_2 &\propto \mathrm{Im}\bigl[\bar\gamma\,(3\alpha - \beta)\bigr],
\end{align}
with positive constants. For real coefficients, $c_2$ vanishes, as Theorem~\ref{thm:quadric} requires, and $c_1 = 0$ becomes $1485\alpha^2 - 330\alpha\beta + 5\beta^2 + 264\gamma^2 = 0$. Completing the square gives $5(\beta - 33\alpha)^2 + 264\gamma^2 = 3960\alpha^2$, an ellipse in the plane $\alpha = 1$, which becomes the unit circle in the rescaled coordinates $(\beta - 33\alpha)/(6\sqrt{22}\alpha)$ and $\gamma/(\sqrt{15}\alpha)$ of Fig.~\ref{fig:conic}. We refer to it as the circle $\mathcal{Q}_{13}$. The real distance-three codes on thirteen qubits are given by the points
\begin{equation}
  (\alpha, \beta, \gamma) = \bigl(1,\ 33 + 6\sqrt{22}\cos\theta,\ \sqrt{15}\sin\theta\bigr),
  \label{eq:param}
\end{equation}
with $\theta \in [0, 2\pi)$. Every point of this circle is a code with transversal $X$, $Z$, and $F$, and its distance is exactly three.

\begin{proposition}
\label{prop:exact3}
Every code in $\mathcal{Q}_{13}$ has distance exactly three.
\end{proposition}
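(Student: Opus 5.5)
Every point of $\mathcal{Q}_{13}$ already has distance at least three by Theorem~\ref{thm:quadric}, so the plan is to show that distance five fails at every point. A code corrects all errors of weight at most two (distance at least five) only if the Knill--Laflamme conditions hold for products of weight up to four. In particular, for a PI code the weight-three and weight-four conditions reduce, by the same averaging argument as in Lemma~\ref{lem:PIKL}, to the statement that $A^\dagger O A \in \C\id$ for every polynomial $O$ of degree at most four in the $J_a$. It therefore suffices to exhibit one collective operator $O$ of degree three or four for which $A^\dagger O A$ fails to be a multiple of $\id$ at every $\theta$ in the parametrization~\eqref{eq:param}.

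The simplest candidate is a weight-three condition. We would use $O = J_z^3$, or more conveniently the tetrahedrally invariant cubic $K_3 \propto \{J_x,J_y,J_z\}$ (symmetrized) mentioned in Remark~\ref{rem:structural}. Because it is invariant under $\BT$, the same argument as in Proposition~\ref{prop:reduction} shows that $A^\dagger K_3 A$ commutes with $X$ and $Z$, so it equals a scalar $k_3(A)\id$. A scalar does not violate the condition, so $K_3$ is useless directly. Instead we should take a non-invariant cubic such as $J_z^3$. By the $D(X)$ and $D(Z)$ rules, $A^\dagger J_z^3 A = d(A)\,\sigma^z$, exactly as for $J_z$, with
\[
d(A)=\braket{0_L|J_z^3|0_L}=\sum_w a_w^2\,(\tfrac{13}{2}-w)^3 .
\]
The code violates the weight-three condition precisely when $d(A)\ne 0$, since $\sigma^z$ is not a multiple of $\id$. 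The key step is then to substitute $b=\alpha p+\beta q+\gamma r$, with $a_w=\binom{13}{w}^{1/2}b_w$, to obtain $d$ as an explicit quadratic form in $(\alpha,\beta,\gamma)$, and to show that $d$ has no common real zero with $q_{13}$ on the circle.

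Concretely, we would restrict $d$ to~\eqref{eq:param}. This gives a trigonometric polynomial $d(\theta)=c_0+c_1\cos\theta+c_2\sin\theta+c_3\cos2\theta+c_4\sin2\theta$ with explicit rational-times-surd coefficients. We would then show it never vanishes, for instance by checking that $|c_0|$ exceeds the sum of the absolute values of the other coefficients. An alternative is to argue that the pencil $d-t\,q_{13}$ contains a definite form for some $t$, which would rule out common real zeros immediately.

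The main obstacle is this final sign analysis. If $J_z^3$ happens to vanish somewhere on the circle, the fallback is to add a second cubic condition, namely $\{J_z,J_x^2\}$ or the degree-two analogue $J_z\{J_x,J_y\}$, whose restriction is again a multiple of $\sigma^z$ by the $X$, $Z$ rules. We would then show that the two resulting quadratics together with $q_{13}$ have no common real projective zero, via a resultant or Gröbner check. If needed, a degree-four condition such as $Q_4$ can also be compared against its fixed value.
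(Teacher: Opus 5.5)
Your target is misstated. Distance exactly three means exhibiting an undetectable error of weight three, that is, ruling out distance four. Ruling out distance five is not enough, so a failed degree-four condition, such as the $Q_4$ comparison you mention at the end, only shows distance at most four and cannot prove the statement.

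Your cubic conditions are the right ones. On $\Dloc(13)$ the weight-one parts of
$J_z^3=\frac18\sum_{k,l,m\ \mathrm{distinct}}Z_kZ_lZ_m+\frac{3n-2}{4}J_z$ and $\{J_z,J_x^2\}=\frac14\sum_{k,l,m\ \mathrm{distinct}}Z_kX_lX_m+\frac n2 J_z$ drop out because $c_1=0$. The two conditions are therefore exactly detectability of $Z_1Z_2Z_3$ and $X_1X_2Z_3$. Both anticommute with $D(X)$, which forces $\braket{0_L|P|0_L}=0$. These are precisely the two Pauli strings the paper uses.

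Your primary plan, however, cannot succeed. On the circle, $d$ is a nonzero multiple of the paper's $f=6975\alpha^2-750\alpha\beta-25\beta^2+504\gamma^2$. Eliminating $\gamma^2$ against~\eqref{eq:conic} at $\alpha=1$ gives $19\beta^2+66\beta-2277=0$. Its root $\beta_*=(-33+24\sqrt{77})/19\approx 9.35$ lies on the circle, with $\gamma_*^2=(600\sqrt{77}-3675)/361>0$. So $d(\theta)$ has two zeros on $\mathcal{Q}_{13}$. No coefficient-domination bound can hold, and no member of the pencil $d-t\,q_{13}$ is definite, since $d$ and $q_{13}$ share real zeros.

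The second cubic is therefore essential, and only $\{J_z,J_x^2\}$ helps. Your other candidate $J_z\{J_x,J_y\}$ (which is cubic, not quadratic) is invariant under conjugation by both $D(X)$ and $D(Z)$. Hence $A^\dagger J_z\{J_x,J_y\}A$ is a scalar and imposes nothing.

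With $\{J_z,J_x^2\}$, your fallback is the paper's proof, and the final check needs no Gr\"obner basis. The map $A\mapsto D(\hat S)A\hat S^\dagger$ acts as $\gamma\mapsto-\gamma$, fixes $J_z$, and interchanges $J_x^2$ and $J_y^2$. The identity $\{J_z,J_x^2\}+\{J_z,J_y^2\}=2j(j+1)J_z-2J_z^3$ on $V_n$ then shows that, on the circle, the $\gamma$-even part of the second form is a multiple of $d$. In the paper's normalization,
$g=-\frac12 f+\frac52\,(1485\alpha^2-330\alpha\beta+5\beta^2+264\gamma^2)-160(3\alpha+\beta)\gamma$.
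At the two zeros of $f$ only $-160(3+\beta_*)\gamma_*$ survives, up to sign, and it is nonzero. This is the paper's observation that $225-450\beta_*+25\beta_*^2+408\gamma_*^2=0$ exactly at those points.
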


\begin{proof}
The distance is at least three because the codes lie in $\Dloc(13)$. For distance four, every weight-three error $P$ would have to satisfy $\braket{0_L | P | 0_L} = \braket{1_L | P | 1_L}$. For the Pauli strings $Z_1 Z_2 Z_3$ and $X_1 X_2 Z_3$ we have $D(X) P D(X) = -P$, and $\ket{1_L} = D(X)\ket{0_L}$, so this would require $\braket{0_L | P | 0_L} = 0$. Computing the Dicke matrix elements, these two conditions read $f = 0$ and $g = 0$, up to nonzero factors, with
\begin{align}
  f &= 6975\alpha^2 - 750\alpha\beta - 25\beta^2 + 504\gamma^2, \\
  g &= 225\alpha^2 - 450\alpha\beta + 25\beta^2 + 408\gamma^2 - (480\alpha + 160\beta)\gamma .
\end{align}
On the circle $\alpha \ne 0$, since $\alpha = 0$ would force $\beta = \gamma = 0$ in~\eqref{eq:conic}, so we set $\alpha = 1$. Eliminating $\gamma^2$ between $f = 0$ and~\eqref{eq:conic} gives $19\beta^2 + 66\beta - 2277 = 0$, with roots $\beta = (-33 \pm 24\sqrt{77})/19$. The negative root gives $\gamma^2 < 0$ in~\eqref{eq:conic}, so $f$ vanishes on the circle only at the two points with $\beta_* = (-33 + 24\sqrt{77})/19$ and $\gamma = \pm\gamma_*$, where $\gamma_*^2 = (600\sqrt{77} - 3675)/361 > 0$. At these points $225 - 450\beta_* + 25\beta_*^2 + 408\gamma_*^2 = 0$ exactly, so $g = -(480 + 160\beta_*)\gamma$, which is nonzero. Hence at every point of the circle one of the two weight-three errors violates the condition.
\end{proof}

\begin{figure}[t]
\centering
\includegraphics[width=\columnwidth]{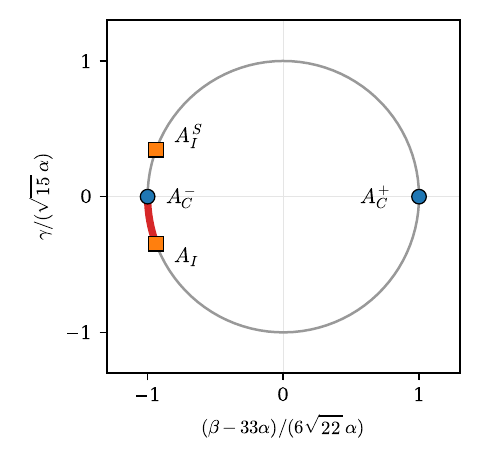}
\caption{The circle $\mathcal{Q}_{13}$ of real distance-three encoders, drawn in the plane $\alpha = 1$. The two Clifford codes $A_C^\pm$ sit at $\theta = 0$ and $\theta = \pi$, the icosahedral code $A_I$ at $\theta_I$, and $A_I^S = D(\hat S) A_I \hat S^\dagger$ at $-\theta_I$. The highlighted arc from $A_C^-$ to $A_I$ is the path along which we construct the error-transparent switch in Sec.~\ref{sec:numerics}.}
\label{fig:conic}
\end{figure}

\paragraph{Clifford and icosahedral codes on the circle.} Four distinguished points lie on the circle; see Fig.~\ref{fig:conic}. Since $D(S)\ket{D_w} = i^w \ket{D_w}$, the relation $D(S) A = A S$ holds exactly when $\ket{0_L}$ uses only weights $w$ divisible by $4$, that is, when $\gamma = 0$. The points $\theta = 0$ and $\theta = \pi$ are therefore Clifford codes, with
\begin{equation}
  \beta/\alpha = 33 \pm 6\sqrt{22},
\end{equation}
and in terms of the normalized gates, they satisfy $D(\hat S) A = -A\hat S$, the choice $\rho_{5/2}$ of Sec.~\ref{sec:setting}. We denote them by $A_C^\pm$. In Dicke coordinates, $A_C^-$ has $(b_0, b_4, b_8, b_{12}) = (165 - 60\sqrt{22},\ -27 + 6\sqrt{22},\ 5,\ 165 - 30\sqrt{22})$. It coincides, up to a global sign, with the $((13,2,3))$ code of Ref.~\cite{Kubischta2025Intrinsic}, whose logical states are given there in the same Dicke labeling. The code $A_C^+$ is different, with $\lvert \braket{A_C^+, A_C^-} \rvert = \frac{1}{14}$, so the thirteen-qubit Clifford code of distance three is not unique. Allowing complex coefficients, the Clifford codes form a whole circle in the two-dimensional space $\M_{\rho_{5/2}}(13)$, of which $A_C^\pm$ are the two real points. By Corollary~\ref{cor:complex} and the remark after it, this circle joins $A_C^+$ to $A_C^-$ through Clifford codes, and along it the error-transparent switch commutes with all transversal Clifford gates.

The icosahedral code sits at $\cos\theta_I = -\sqrt{22}/5$ and $\sin\theta_I = -\sqrt{3}/5$, where $(\alpha, \beta, \gamma) \propto (5, 33, -3\sqrt{5})$. Its Dicke coordinates are
\begin{equation}
\begin{split}
  (b_0, b_2, \ldots, b_{12}) = (&-495,\ -33\sqrt5,\ -3,\ 3\sqrt5, \\ &25,\ -9\sqrt5,\ 165),
\end{split}
\end{equation}
and on it $D(\Phi)$ implements the logical gate $\Phi'$, as Corollary~\ref{cor:larger} predicts. The space $\M_{\rho'}(13)$ is one-dimensional, so $A_I$ is the only icosahedral code of this kind. The reflection $\gamma \mapsto -\gamma$ of the circle is the map $A \mapsto D(\hat S) A \hat S^\dagger$, which preserves $\M(13)$ because conjugation by $\hat S$ maps $\BT$ to itself. It sends $A_I$ to $A_I^S = D(\hat S) A_I \hat S^\dagger$ at $-\theta_I$, an icosahedral code for the conjugated group $\hat S\, \BI\, \hat S^\dagger$.

The arc of the circle from $A_C^-$ at $\theta = \pi$ to $A_I$ at $\theta_I = \pi + \arcsin(\sqrt3/5)$ spans about $20.3^\circ$ in the parametrization~\eqref{eq:param}. The overlaps $\lvert \braket{A_C^\pm, A_I} \rvert$ are $0.264$ for $A_C^+$ and $0.808$ for $A_C^-$, so $A_C^-$ is the natural partner of the icosahedral code. Corollary~\ref{cor:switch} gives error-transparent switches from both Clifford codes to $A_I$.

\section{Numerical construction and fault analysis}
\label{sec:numerics}

We constructed the switch of Theorem~\ref{thm:existence} along the arc from $A_C^-$ to $A_I$ and compared it with two simpler switches whose trajectories leave $\Dloc(13)$. The comparison illustrates Proposition~\ref{prop:necessary} rather than benchmarking the switches against one another, since the two comparison switches are generated by low-rank collective Hamiltonians while the error-transparent switch is not, as shown at the end of this section. All three commute with qubit permutations and with $D(\hat X), D(\hat Z), D(\hat F)$, and all three map $A_C^-$ to $A_I$ to within $1.3 \times 10^{-12}$ or better.

The first comparison switch is the rotation in $\M(13)$ that turns $A_C^-$ into $A_I$ within the plane they span, through the angle $\arccos 0.808 \approx 36.1^\circ$, and acts as the identity on all states orthogonal to $A_C^- e_i$ and $A_I e_i$. We call it the geodesic switch. The second is a sequence of four pulses generated by two collective Hamiltonians that commute with the transversal gates, the symmetrized cubic $K_3 = \frac16\sum_{\pi \in S_3} J_{\pi(x)} J_{\pi(y)} J_{\pi(z)}$ and the quartic $Q_4 = J_x^4 + J_y^4 + J_z^4$. The product $e^{-it_4 Q_4} e^{-it_3 K_3} e^{-it_2 Q_4} e^{-it_1 K_3}$ with $(t_1, t_2, t_3, t_4) = (0.16871072, -0.16703516, 0.12142938, 0.06190127)$ maps $A_C^-$ to $A_I$ with infidelity $1.3 \times 10^{-12}$, where a negative time means evolution under $-Q_4$ for the duration $\lvert t_i \rvert$, whereas the best product of three or fewer alternating pulses we found leaves an infidelity of $0.02$. This solution is sensitive to the pulse times. To second order, the switching infidelity is a quadratic form in the timing errors whose eigenvalues range from $2.7 \times 10^2$ to $1.5 \times 10^5$, and relative timing errors of $10^{-4}$ and $10^{-3}$ produce mean switching infidelities of $3.4 \times 10^{-5}$ and $3.4 \times 10^{-3}$ over random perturbations. Such timing errors are control errors of the kind analyzed in Proposition~\ref{prop:control}. A timing error $\delta$ in pulse $i$ displaces the final encoder by $-i\delta\, U_{>i} H_i A_{\le i}$, where $H_i \in \{K_3, Q_4\}$, $A_{\le i}$ is the encoder after pulse $i$, and $U_{>i}$ is the evolution that follows. The component of this displacement along the final encoder is a first-order global phase of the logical state, and the component orthogonal to it is the displacement $B$ of Proposition~\ref{prop:control}. Proposition~\ref{prop:control} then gives a logical Pauli error with probability $\ell_i \delta^2$ out of a total displacement $d_i \delta^2$, with $(\ell_i) = (3.5 \times 10^2,\ 1.0 \times 10^5,\ 5.5 \times 10^2,\ 7.0 \times 10^3)$ and $(d_i) = (3.9 \times 10^2,\ 1.1 \times 10^5,\ 1.3 \times 10^3,\ 1.5 \times 10^5)$. For the first two pulses, about $90\%$ of the displacement becomes a logical Pauli error, for the third about $41\%$, and for the last about $5\%$.

The computations do not require the full $2^{13}$-dimensional space. It suffices to work in the $38$-dimensional subspace spanned by $V_{13}$ and by the states $\sigma^a_k \ket{D_w}$ with $k \in \{1, 2\}$. This subspace is mapped to itself by all collective operators, and by permutation symmetry, faults on qubits $1$ and $2$ determine the behavior of faults on every qubit; Appendix~\ref{app:numerics} gives the details. We divided the arc into $300$ steps and built $U(\lambda)$ from the maps $Y_\lambda$ of Theorem~\ref{thm:existence} and the transport of the complement, using the ten operators $\id$, $J_x, J_y, J_z$, $\sigma^a_1$, and $\sigma^a_2$ as the error frame. Along the path, the inner products of the frame states agree with their values at the start to $5 \times 10^{-14}$, $U(\lambda)$ is unitary to $3 \times 10^{-14}$, and in Frobenius norm, it commutes with $D(X)$, $D(Z)$, $D(F)$, and the exchange of qubits $1$ and $2$ to $2 \times 10^{-13}$ and with the total spin $J_x^2 + J_y^2 + J_z^2$ to $1 \times 10^{-12}$. At the endpoint $A_I^\dagger U(1) A_C^- = \id$ to the same precision. The generator $H_\theta = i\, \partial_\theta U\, U^\dagger$ of this switch is a collective operator, that is, a polynomial in $J_x, J_y, J_z$, and only odd degrees occur, as they must for a real path, because $U(\theta)$ is then real in the computational basis, so $H_\theta$ is imaginary and odd under time reversal. Its parts of degree $3$, $7$, $9$, $11$, and $13$ carry squared weights $33\%$, $41\%$, $18\%$, $0.6\%$, and $7\%$ at the start of the arc and $46\%$, $22\%$, $26\%$, $0.3\%$, and $6\%$ at the end. Its integrated action $\int \lVert H_\theta \rVert\, d\theta$ is $0.94$ in units where the $J_a$ are dimensionless, against $4.4 \times 10^2$ for the pulse sequence, whose bare times understate its cost because $\lVert K_3 \rVert = 43$ and $\lVert Q_4 \rVert = 1.9 \times 10^3$ on the relevant sectors. The complement transport is a free choice, so these numbers describe one generator among many. To remove this freedom, we solved the error-transparency condition directly for the generator. Differentiating~\eqref{eq:ET} along the path gives $-i H_\theta\, E A_\theta = E\, \partial_\theta A_\theta$ for all $E \in \Err_1$, a linear condition on $H_\theta$, which we solved by least squares at several points of the arc over the collective $\BT$-invariant Hermitian operators of degree at most $d$. No generator of degree at most $11$ is error transparent on this arc, the smallest relative residual being $0.17$, while degree $13$ gives residual $10^{-10}$. A generator of degree $3$ cannot even keep the encoder on $\Dloc(13)$, and truncating the generator above to degree $3$ gives an endpoint infidelity of $0.135$. A low-order collective Hamiltonian therefore cannot reproduce an error-transparent switch along this arc.

For each switch, we applied a Pauli fault $\sigma^a_1$, with $a \in \{x,y,z\}$, at a sequence of times, carried the resulting state through the rest of the switch, and applied the recovery of the icosahedral code described in Appendix~\ref{app:numerics}. That recovery measures which syndrome subspace the state is in, and it treats a state outside the error space of the code as a heralded erasure. We report the entanglement infidelity of the recovered logical channel with erasure counted as failure, averaged over the Pauli type and over fault times sampled uniformly in the switch parameter, together with the leakage probability, which is the probability of erasure. Infidelities conditioned on no erasure are given in Appendix~\ref{app:numerics}. The error-transparent switch gives a mean infidelity of $1.0 \times 10^{-15}$ and a worst case of $2.9 \times 10^{-14}$, as Proposition~\ref{prop:ET} predicts. The geodesic switch gives a mean of $0.127$ and a worst case of $0.352$, with mean leakage $0.118$, so most of its failures are heralded. The pulse sequence gives a mean of $0.850$ and a worst case of $0.992$, with mean leakage $0.478$. Figure~\ref{fig:faults} shows how the infidelity depends on the fault time. The geodesic switch is exact only for a fault after it has finished. A fault at its start hits the source code, which has distance three, so the fault is correctable at that moment; the remaining rotation, however, does not carry the error space of $A_C^-$ onto that of $A_I$, and the error arrives outside the error space of the target code. At intermediate times both comparison switches also pass through encoders without distance three, where Proposition~\ref{prop:necessary} applies. The pulse sequence leaves $\Dloc(13)$ immediately and returns only at the endpoint.

\begin{figure}[t]
\centering
\includegraphics[width=\columnwidth]{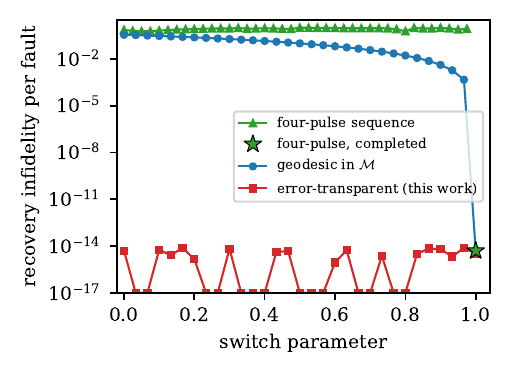}
\caption{Entanglement infidelity after recovery of the icosahedral code, with states outside its error space counted as erasure, for a single Pauli fault on one qubit at a given fraction of the switch, averaged over the three Pauli types. The horizontal axis is the path parameter for the error-transparent and geodesic switches, and for the pulse sequence it is $(i - 1 + f)/4$ for a fault at fraction $f$ of pulse $i$. The geodesic switch and the four-pulse sequence both map $A_C^-$ to $A_I$ to numerical precision but leave the set of distance-three codes on the way. The error-transparent switch of Theorem~\ref{thm:existence} is recovered to machine precision at every fault time. The star marks a fault after the four-pulse switch has completed, which the target recovery corrects; it is not part of the forty-sample average reported in the text. Values below $10^{-17}$, which are zero to machine precision, are drawn at that floor.}
\label{fig:faults}
\end{figure}

\section{Discussion}
\label{sec:discussion}

\paragraph{What the switch requires physically.} The $n$-qubit space splits into sectors of fixed total spin $s$, and an operator that commutes with all qubit permutations acts within each sector, in the same way on each of its copies~\cite{GoodmanWallach}. A code state lies in the sector $s = j$, and a single-qubit error adds a component in the sector $s = j - 1$, so error transparency constrains only these two sectors, and the switch may act as the identity on all others. It is therefore specified by its action $U_j$ on the first sector and $U_{j-1}$ on the second, and error transparency constrains both. Every such pair is realized by a polynomial in $J_x, J_y, J_z$, for example by using the projections onto the two sectors, which are polynomials in $J_x^2 + J_y^2 + J_z^2$. The action on $V_n$ does not determine $U_{j-1}$, so control that addresses only the symmetric subspace is not enough to implement the switch. Collective control of this kind is available through geometric phase gates and one-axis twisting~\cite{Ouyang2025Switch,OuyangBrennen2026}. Our results establish that the switching unitary exists and construct it as an operator, not as a control sequence. Synthesizing it from available collective interactions with bounded resources is the natural next step, and we have not addressed it here.

\paragraph{Control errors.} The fault model covers single-qubit faults on the data at arbitrary times during the switch. It does not cover errors in the control fields themselves. Control errors that commute with the transversal gates, such as a wrong amplitude or duration of a pulse of $K_3$ or $Q_4$, never act on the logical state, by the argument after Definition~\ref{def:switch}. They only move the encoder to a nearby point of $\M(n)$, possibly one without distance three. Their effect on recovery is controlled by one vector in $\M(n)$. Let $c_1(A, B) = \braket{A e_0 | J_z | B e_0}$, so that $c_1(A) = c_1(A, A)$, and set
\begin{equation}
  N(A) = \sum_{a} J_a A \sigma^a .
\end{equation}

\begin{proposition}
\label{prop:control}
Let $[A] \in \Dloc(n)$ with $\braket{A, A} = 1$. Then $N(A) \in \M(n)$, $\braket{A, N(A)} = 0$, $\braket{N(A), N(A)} = j(j+1)$, and $c_1(A, B) = \frac13 \braket{N(A), B}$ for all $B \in \M(n)$. Let $B \in \M(n)$ satisfy $\braket{A, B} = 0$ and $\braket{B, B} = 1$, let $A_\varepsilon = \cos\varepsilon\, A + \sin\varepsilon\, B$, and let $\hat N = N(A)/\sqrt{j(j+1)}$. For every logical state $\psi$, the standard recovery of $\mathcal{C}_A$ applied to $A_\varepsilon \psi$ returns $\psi$ with probability $\cos^2\varepsilon$, returns $\sigma^a \psi$ with probability $\frac13 \sin^2\varepsilon\, \lvert\braket{\hat N, B}\rvert^2$ for each $a \in \{x, y, z\}$, and with the remaining probability $\sin^2\varepsilon\,(1 - \lvert\braket{\hat N, B}\rvert^2)$ finds a state outside the error space of $\mathcal{C}_A$.
\end{proposition}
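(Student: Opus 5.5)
The plan is to prove the algebraic identities for $N(A)$ first, by repeating the symmetry argument of Proposition~\ref{prop:reduction} with a mixed sandwich $A^\dagger O B$, and then to read off the recovery statistics by projecting $B\psi$ onto the error space of $\mathcal{C}_A$.

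\emph{Membership and inner products.} For a generator $g \in \{\hat X, \hat Z, \hat F\}$, write $D(g) J_a D(g)^\dagger = \sum_b R_{ba} J_b$ and $g \sigma^a g^\dagger = \sum_c R_{ca}\sigma^c$. Then
\[
D(g) N(A) = \sum_a D(g)J_aD(g)^\dagger\, D(g)A\, \sigma^a = \sum_{a,b} R_{ba} J_b A\, g\sigma^a g^\dagger g = \sum_{a,b,c} R_{ba}R_{ca}\, J_b A \sigma^c g .
\]
Orthogonality of $R$ collapses this to $N(A)g$, so $N(A) \in \M(n)$. Next, $A^\dagger N(A) = \sum_a (A^\dagger J_a A)\sigma^a = \sum_a c_1\sigma^a\sigma^a = 3c_1\id$ by Proposition~\ref{prop:reduction}, and this vanishes because $[A]\in\Dloc(n)$. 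For the norm, write
\[
N^\dagger N = \sum_{a,b}\sigma^a (A^\dagger J_aJ_bA)\sigma^b .
\]
Split $J_aJ_b$ into its anticommutator and commutator parts. The anticommutators with $a \ne b$ vanish, the commutators give $A^\dagger J_c A = 0$, and $A^\dagger J_a^2A = \frac{j(j+1)}3\id$. Hence $N^\dagger N = j(j+1)\id$. For $c_1(A,B)$, observe that the rule~\eqref{eq:conjrule} holds verbatim for $A^\dagger O B$ with $A, B\in\M(n)$. So $A^\dagger J_zB$ commutes with $Z$ and anticommutes with $X$, which makes it $c_1(A,B)\sigma^z$. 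Cycling with $\hat F$ gives $A^\dagger J_aB = c_1(A,B)\sigma^a$. Then $N^\dagger B = \sum_a \sigma^a c_1(A,B)\sigma^a = 3c_1(A,B)\id$, and taking the top-left entry gives the stated identity.

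\emph{Projection onto the error space.} Write $A_\varepsilon\psi = \cos\varepsilon\, A\psi + \sin\varepsilon\, B\psi$. The vector $A\psi$ lies in the syndrome-$0$ subspace. Since $\Gamma_{0,(a,k)} = 0$, we have $w_{0,i} = Ae_i$, so this part is recovered as $\psi$. The vector $B\psi$ is orthogonal to this subspace because $A^\dagger B = \braket{A,B}\id = 0$. For the other syndromes, permutation invariance (as in Lemma~\ref{lem:PIKL}) gives
\[
\braket{\sigma^a_kA\phi | B\psi} = \tfrac2n\braket{A\phi|J_aB\psi} = \tfrac2n c_1(A,B)\braket{\phi|\sigma^a\psi},
\]
independently of $k$. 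Because the block $\frac23\id_n + \frac13\mathbf u\mathbf u^{\mathsf T}$ of $\Gamma$ has $\mathbf u$ as an eigenvector, the projection of $B\psi$ onto the error space lies in the span of the symmetric vectors $\sum_k\sigma^a_kA\phi = 2J_aA\phi$. The Gram relations $\braket{J_aA\phi | J_bA\phi'} = \frac{j(j+1)}3\delta_{ab}\braket{\phi|\phi'}$ then identify this projection as
\[
\frac{3c_1(A,B)}{j(j+1)}\sum_a J_aA\sigma^a\psi .
\]

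\emph{Recovery bookkeeping.} This is the step needing most care. Since $E_\mu Ae_i = \sum_\eta(\Gamma^{1/2})_{\eta\mu}w_{\eta,i}$, the standard recovery sends each error state $E_\mu A\phi$ to a $\phi$-independent scalar multiple of $A\phi$, for every syndrome outcome. Consequently, all outcomes arising from the $J_aA\sigma^a\psi$ term return the logical state $\sigma^a\psi$. Terms with different $a$ occupy orthogonal syndrome subspaces, so they do not interfere. The total probability for a given $a$ is
\[
\sin^2\varepsilon\cdot\frac{9|c_1(A,B)|^2}{j(j+1)^2}\cdot\frac{j(j+1)}{3} = \tfrac13\sin^2\varepsilon\,|\braket{\hat N,B}|^2,
\]
using $3c_1(A,B) = \braket{N(A),B}$. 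The syndrome-$0$ outcome gives $\psi$ with probability $\cos^2\varepsilon$. The rest of $B\psi$, of weight $\sin^2\varepsilon\,(1-|\braket{\hat N,B}|^2)$, lies outside the error space and is reported as an erasure.
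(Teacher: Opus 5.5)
Your proof is correct and follows the paper's argument. The identities for $N(A)$ are proved the same way, and your projection $\frac{3c_1(A,B)}{j(j+1)}\sum_a J_a A\sigma^a\psi$ equals $\braket{\hat N, B}\,\hat N\psi$, which is the paper's splitting $B = \braket{\hat N,B}\hat N + B_\perp$ with $B_\perp\psi$ orthogonal to the error space; your $\Gamma^{1/2}$ bookkeeping spells out the recovery step that the paper states in one line. One small slip: in the probability for a given $a$ the denominator should be $\bigl(j(j+1)\bigr)^2$, not $j(j+1)^2$, which is what makes your product come out to $\tfrac13\sin^2\varepsilon\,\lvert\braket{\hat N,B}\rvert^2$.
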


\begin{proof}
For $k \in \BT$ we have $D(k) J_a = \sum_b R(k)_{ba} J_b D(k)$ and $D(k) A = A k$, so $D(k) N(A) = \sum_b J_b A \bigl(\sum_a R(k)_{ba}\, k \sigma^a k^\dagger\bigr) k = N(A)\, k$, using $\sum_a R(k)_{ba}\, k \sigma^a k^\dagger = \sigma^b$. Hence $N(A) \in \M(n)$. The computation in the proof of Proposition~\ref{prop:reduction}, applied to $A^\dagger J_a B$ in place of $A^\dagger J_a A$, gives $A^\dagger J_a B = c_1(A, B) \sigma^a$. Therefore $\braket{N(A), B} = \sum_a (\sigma^a A^\dagger J_a B)_{00} = 3 c_1(A, B)$, and in particular $\braket{A, N(A)} = 3 c_1(A) = 0$. On $\Dloc(n)$, Proposition~\ref{prop:reduction} gives $A^\dagger J_a J_b A = \frac{j(j+1)}{3}\delta_{ab}\id$, so $\braket{N(A), N(A)} = j(j+1)$.

For the recovery, write $B = \braket{\hat N, B} \hat N + B_\perp$. Then $\braket{A, B_\perp} = 0$ and $c_1(A, B_\perp) = 0$, so $A^\dagger \sigma^a_k B_\perp = \frac2n A^\dagger J_a B_\perp = 0$, and $B_\perp \psi$ is orthogonal to every error state of $\mathcal{C}_A$. On the other hand $\hat N \psi = \sum_a J_a A \sigma^a \psi / \sqrt{j(j+1)}$, and $J_a A \sigma^a \psi = \frac12 \sum_k \sigma^a_k A \sigma^a\psi$ is the symmetric sum of the errors $\sigma^a_k$ applied to the encoded state $\sigma^a \psi$. The three vectors $J_a A \sigma^a \psi$ are mutually orthogonal, orthogonal to $A\psi$, and have squared norm $\frac{j(j+1)}{3}$, so the recovery maps each of them to $\sigma^a \psi$ with the stated probabilities.
\end{proof}

A control error of size $\varepsilon$ therefore never rotates the logical state coherently. It causes a logical Pauli error with probability at most $\sin^2\varepsilon$, and none at all when the displacement $B$ is orthogonal to $N(A)$. In that case, the displaced component is not harmless; it lies outside the error space of $\mathcal{C}_A$ and is lost with probability $\sin^2\varepsilon$, which a recovery that flags states outside the error space reports as a heralded erasure. The vector $N(A)$ plays the role of a normal vector to the set of distance-three codes, since $c_1(A, B) = \frac13\braket{N(A), B}$, and for a real encoder on $\mathcal{Q}_n$, it is real and perpendicular to the circle or quadric. Tangency alone does not decide which displacements cause logical errors. The direction $i\hat N$ satisfies $\mathrm{Re}\, c_1(A, i\hat N) = 0$, so it is tangent to the hypersurface $\{c_1 = 0\}$, yet it produces logical Pauli errors at the full rate $\sin^2\varepsilon$; the displacements that produce none are exactly those orthogonal to $\hat N$. For the record, $i\hat N$ is not tangent to the set $\Dloc(n)$ itself, since that also requires $\mathrm{Re}\, c_2(A, i\hat N) = 0$, and for a real encoder, $c_2(A, \hat N)$ is purely imaginary, so $c_2(A, i\hat N)$ is real and generally nonzero. At $A = A_I$ it equals $-12.5$. On thirteen qubits we verified both cases numerically for $A = A_I$. A displacement along the tangent of $\mathcal{Q}_{13}$ produces no logical Pauli error, and a displacement along $\hat N(A_I)$ produces one with probability $\sin^2\varepsilon$. Successive control errors of this kind add up within $\M(n)$, and each round of recovery on the target code returns the encoder to $A$ except for the probabilities above. A data fault that follows a control error of size $\varepsilon$ acts on an encoder that is off the intended trajectory by $O(\varepsilon)$, and the remaining evolution is off by $O(\varepsilon)$ as well. By error transparency of the ideal switch, the fault then reaches the target code as the ideal single-qubit error plus a remainder of size $O(\varepsilon)$, so the combined failure probability is $O(p\,\varepsilon^2)$ for a fault probability $p$, which is of second order in the joint error budget when $p$ and $\varepsilon^2$ are counted at the same order. Control errors that do not commute with the transversal gates require a separate analysis. A second limitation concerns the synthesis of the switch. Error transparency is a property of the unitary $U(\lambda)$ acting on data qubits subject to single-qubit faults. Once $U(\lambda)$ is decomposed into physical operations, the fault model has to be revisited, because an interaction of cubic or quartic degree in the collective spin realized through a gate decomposition can turn one physical fault into a correlated error on several qubits, which is outside $\Err_1$ and is not covered by Proposition~\ref{prop:ET}. This is the main risk for a switch of this kind. Error transparency as defined in~\eqref{eq:ET} concerns faults at the points $U(\lambda)$ of the path. If the switch is realized as a sequence of segments taking $U(\lambda_i)$ to $U(\lambda_{i+1})$, a single-qubit fault between segments is covered, while a fault partway through a segment acts at an operator that is not on the path and is not covered unless the segment itself is error transparent. A decomposition would therefore have to be error transparent segment by segment, and even then only against the single-qubit faults it produces. Implementations in which the collective interaction is mediated by a global coupling, such as a common bosonic mode, keep the data faults single-qubit but introduce faults of the mediator, whose propagation needs its own analysis, as in Ref.~\cite{Ouyang2025Switch}.

\paragraph{Beyond distance three and other groups.} Theorem~\ref{thm:gram} is special to distance three, for the reason given in Remark~\ref{rem:structural}. For a code of distance five, the relevant inner products involve polynomials of degree four in the $J_a$, and the tetrahedral group has invariants of degree three and four, namely $K_3$ and $Q_4$ above, whose values on the code are not fixed by symmetry and may therefore vary among distance-five codes. The construction of Theorem~\ref{thm:existence} still applies to any path of codes along which the error Gram matrix for the relevant errors is constant, since that is the only property of $\Dloc(n)$ the proof uses, and it then yields an error-transparent switch. Whether single-fault tolerance forces the Gram matrix to be constant, as it does at distance three through Proposition~\ref{prop:reduction}, is a separate question that we have not settled, so the equivalence of Corollary~\ref{cor:classification} is not claimed beyond distance three. The same caution applies at distance three for gate groups that fix an axis of the Bloch sphere, such as the binary dihedral groups that give transversal $T$ gates~\cite{Kubischta2025}, because there the value of $A^\dagger J_z^2 A$ is not fixed by symmetry. Remark~\ref{rem:structural} turns this into a criterion that can be checked for any finite group of gates from the invariant polynomials of its rotation image.

\paragraph{Place in a full protocol.} Error transparency addresses the switch, not the rest of the computation. Fault-tolerant error correction and state preparation for PI codes, which are not stabilizer codes, remain the main obstacles to a complete protocol, and recent recovery algorithms for arbitrary PI codes~\cite{OuyangBrennen2026} are the natural starting point. The switch needs no ancilla code blocks or measurements between codes, as teleportation-based switching would, and it removes the stabilizer code from the measurement-free scheme of Ref.~\cite{Ouyang2025Switch}. The Clifford group together with any non-Clifford gate generates a dense set of single-qubit gates~\cite{ConwaySmith}, so the Clifford code and the icosahedral code together provide a universal single-qubit gate set, generated by the transversal Clifford gates on one code and the transversal icosahedral gates on the other, with an error-transparent switch between them. Kubischta and Teixeira showed that universal gate sets built from the binary icosahedral group together with one additional gate approximate single-qubit unitaries with shorter sequences than Clifford$+T$~\cite{Kubischta2023}; a compilation analysis for the gate set considered here is left open.

\paragraph{Related mechanisms.} Our construction parallels error-transparent gates for bosonic and superconducting qubits~\cite{Vy2013,Kapit2018,Ma2020}, where the gate Hamiltonian commutes with the error operators on the code space, and fault-tolerant holonomic computation~\cite{Oreshkov2009}, where slow evolution within a code space implements logical gates. The same commutation holds here. For a differentiable switch with generator $H_\lambda = i\,\dot U(\lambda) U(\lambda)^\dagger$, differentiating the error-transparency condition~\eqref{eq:ET} gives $[H_\lambda, E] A_\lambda = 0$ for every $E \in \Err_1$, so the switching Hamiltonian commutes with every single-qubit error on the instantaneous code space. The difference lies in what the condition is used for. In those works, the code is fixed and the Hamiltonian implements a logical gate, whereas here the code itself moves, and Theorem~\ref{thm:gram} supplies the existence of a Hamiltonian satisfying the condition along the whole path. The same idea applies to code deformation more broadly~\cite{BombinMartinDelgado2009,Vuillot2019}, for continuous families of codes whose error Gram matrix is constant, provided the resulting maps between error spaces can be realized with the available physical controls.

\section{Conclusion}
\label{sec:conclusion}

We have shown that switching between permutation-invariant codes with transversal Clifford and icosahedral gates is governed by the shape of the set of distance-three codes in a shared space of encoders. The transversal gates $X$, $Z$, and $F$ reduce the distance-three conditions to two quadratic equations, every code satisfying them has the same error geometry, and a switch that tolerates a single-qubit fault at an unknown time exists if and only if the two codes are joined by a path of distance-three codes, in which case the switch can be chosen error transparent. For real codes, this set is the zero set of one quadratic form, and on thirteen qubits, it is an explicit circle joining two Clifford codes to a code with transversal icosahedral gates. The switch is constructed as an operator, and its decomposition into physical operations remains open.

The broader lesson concerns the relation between the geometry of a family of codes and the fault tolerance of moving within it. Fault-tolerant switching is possible here because the correctable-error geometry stays constant along the family, so that the error spaces of all codes on a path can be identified with one another and fault tolerance becomes a question of connectivity. Constant error geometry is a strong condition that generic families of codes do not satisfy. Remark~\ref{rem:structural} gives a criterion for when it holds, namely the absence of low-degree invariant polynomials for the rotation image of the gate group. Where it holds, for other finite groups, higher spin, or qudit systems, the same reasoning turns the classification of fault-tolerant deformations into a question about the shape of a set defined by polynomial equations, provided the maps between error spaces can be realized with the available controls. Finding such families, and extending the construction beyond distance three, are natural directions for future work.

\bibliographystyle{unsrt}
\bibliography{refs}

@article{EastinKnill2009, 
author={Eastin, Bryan and Knill, Emanuel}, 
title={Restrictions on Transversal Encoded Quantum Gate Sets}, journal={Phys. Rev. Lett.}, 
volume={102}, 
pages={110502}, 
year={2009}}

@article{ZengCrossChuang2011, 
author={Zeng, Bei and Cross, Andrew and Chuang, Isaac L.}, title={Transversality Versus Universality for Additive Quantum Codes}, 
journal={IEEE Trans. Inf. Theory}, 
volume={57}, 
pages={6272--6284}, 
year={2011}}

@article{BravyiKonig2013, 
author={Bravyi, Sergey and K{\"o}nig, Robert}, title={Classification of Topologically Protected Gates for Local Stabilizer Codes}, 
journal={Phys. Rev. Lett.}, 
volume={110}, 
pages={170503}, 
year={2013}}

@article{BravyiKitaev2005, 
author={Bravyi, Sergey and Kitaev, Alexei}, 
title={Universal Quantum Computation with Ideal {C}lifford Gates and Noisy Ancillas}, 
journal={Phys. Rev. A}, 
volume={71}, 
pages={022316}, 
year={2005}}

@article{Gidney2024, 
author={Gidney, Craig and Shutty, Noah and Jones, Cody}, title={Magic State Cultivation: Growing {T} States as Cheap as {CNOT} Gates}, 
journal={arXiv:2409.17595}, 
year={2024}}

@article{PaetznickReichardt2013, 
author={Paetznick, Adam and Reichardt, Ben W.}, 
title={Universal Fault-Tolerant Quantum Computation with Only Transversal Gates and Error Correction}, 
journal={Phys. Rev. Lett.}, 
volume={111}, 
pages={090505}, 
year={2013}}

@article{Anderson2014, 
author={Anderson, Jonas T. and Duclos-Cianci, Guillaume and Poulin, David}, 
title={Fault-Tolerant Conversion between the {S}teane and {R}eed-{M}uller Quantum Codes}, 
journal={Phys. Rev. Lett.}, 
volume={113}, 
pages={080501}, 
year={2014}}

@article{Bombin2015, 
author={Bomb{\'i}n, H{\'e}ctor}, 
title={Gauge Color Codes: Optimal Transversal Gates and Gauge Fixing in Topological Stabilizer Codes}, 
journal={New J. Phys.}, 
volume={17}, 
pages={083002}, 
year={2015}}

@article{Butt2024, 
author={Butt, Friederike and Heu{\ss}en, Sascha and Rispler, Manuel and M{\"u}ller, Markus}, 
title={Fault-Tolerant Code-Switching Protocols for Near-Term Quantum Processors}, 
journal={PRX Quantum}, 
volume={5}, 
pages={020345}, 
year={2024}}

@article{Heussen2025, 
author={Heu{\ss}en, Sascha and Hilder, Janine}, 
title={Efficient Fault-Tolerant Code Switching via One-Way Transversal {CNOT} Gates}, 
journal={Quantum}, 
volume={9}, 
pages={1846}, 
year={2025}}

@article{Pogorelov2024,
  title={Experimental fault-tolerant code switching},
  author={Pogorelov, Ivan and Butt, Friederike and Postler, Lukas and Marciniak, Christian D and Schindler, Philipp and M{\"u}ller, Markus and Monz, Thomas},
  journal={Nature Physics},
  volume={21},
  number={2},
  pages={298--303},
  year={2025},
  publisher={Nature Publishing Group UK London}
}

@article{Beverland2021, 
author={Beverland, Michael E. and Kubica, Aleksander and Svore, Krysta M.}, 
title={Cost of Universality: A Comparative Study of the Overhead of State Distillation and Code Switching with Color Codes}, journal={PRX Quantum}, 
volume={2}, 
pages={020341}, 
year={2021}}

@article{Ruskai2000, 
author={Ruskai, Mary Beth}, 
title={Pauli Exchange Errors in Quantum Computation}, journal={Phys. Rev. Lett.}, 
volume={85}, 
pages={194--197}, 
year={2000}}

@article{PollatsekRuskai2004, 
author={Pollatsek, Harriet and Ruskai, Mary Beth}, title={Permutationally Invariant Codes for Quantum Error Correction}, 
journal={Linear Algebra Appl.}, 
volume={392}, 
pages={255--288}, 
year={2004}}

@article{Ouyang2014, 
author={Ouyang, Yingkai}, 
title={Permutation-Invariant Quantum Codes}, 
journal={Phys. Rev. A}, 
volume={90}, 
pages={062317}, 
year={2014}}

@article{OuyangFitzsimons2016, 
author={Ouyang, Yingkai and Fitzsimons, Joseph}, 
title={Permutation-Invariant Codes Encoding More than One Qubit}, journal={Phys. Rev. A}, 
volume={93}, 
pages={042340}, 
year={2016}}

@article{Kubischta2023, 
author={Kubischta, Eric and Teixeira, Ian}, 
title={Family of Quantum Codes with Exotic Transversal Gates}, journal={Phys. Rev. Lett.}, 
volume={131}, pages={240601}, 
year={2023}}

@article{Kubischta2024, 
author={Kubischta, Eric and Teixeira, Ian}, 
title={Quantum Codes from Twisted Unitary $t$-Groups}, 
journal={Phys. Rev. Lett.}, 
volume={133}, 
pages={030602}, 
year={2024}}

@article{Kubischta2025, 
author={Kubischta, Eric and Teixeira, Ian}, 
title={Permutation-Invariant Quantum Codes with Transversal Generalized Phase Gates}, 
journal={IEEE Trans. Inf. Theory}, volume={71}, pages={485--498}, year={2025}}

@article{Kubischta2025Intrinsic, 
author={Kubischta, Eric and Teixeira, Ian}, 
title={Intrinsic Quantum Codes}, 
journal={arXiv:2511.14840}, 
year={2025}}

@article{Ouyang2025Switch, 
author={Ouyang, Yingkai and Jing, Yumang and Brennen, Gavin K.}, title={Measurement-Free Code-Switching Protocol for Low-Overhead Quantum Computation Using Permutation-Invariant Codes}, journal={PRX Quantum}, 
volume={6}, 
pages={040341}, 
year={2025}}

@article{Gross2021, 
author={Gross, Jonathan A.}, 
title={Designing Codes around Interactions: The Case of a Spin}, 
journal={Phys. Rev. Lett.}, 
volume={127}, 
pages={010504}, 
year={2021}}

@article{AydinAlbertBarg2026,
  title={Quantum error correction beyond SU (2): Spin, bosonic, and permutation-invariant codes from convex geometry},
  author={Aydin, Arda and Albert, Victor V and Barg, Alexander},
  journal={PRX Quantum},
  volume={7},
  number={1},
  pages={010341},
  year={2026},
  publisher={APS}
}

@article{OuyangBrennen2026, 
author={Ouyang, Yingkai and Brennen, Gavin K.}, 
title={A Theory of Quantum Error Correction for Permutation-Invariant Codes}, 
journal={arXiv:2602.13638}, 
year={2026}}

@article{KnillLaflamme1997, 
author={Knill, Emanuel and Laflamme, Raymond}, 
title={Theory of Quantum Error-Correcting Codes}, 
journal={Phys. Rev. A}, 
volume={55}, 
pages={900--911}, 
year={1997}}

@article{Vy2013, 
author={Vy, Os and Wang, Xiaoting and Jacobs, Kurt}, 
title={Error-Transparent Evolution: The Ability of Multi-Body Interactions to Bypass Decoherence}, 
journal={New J. Phys.}, 
volume={15}, 
pages={053002}, 
year={2013}}

@article{Kapit2018, 
author={Kapit, Eliot}, 
title={Error-Transparent Quantum Gates for Small Logical Qubit Architectures}, 
journal={Phys. Rev. Lett.}, 
volume={120}, 
pages={050503}, 
year={2018}}

@article{Ma2020,
  title={Error-transparent operations on a logical qubit protected by quantum error correction},
  author={Ma, Yuwei and Xu, Yuan and Mu, Xianghao and Cai, Weizhou and Hu, Ling and Wang, Weiting and Pan, Xiaoxuan and Wang, Haiyan and Song, YP and Zou, C-L and others},
  journal={Nature Physics},
  volume={16},
  number={8},
  pages={827--831},
  year={2020},
  publisher={Nature Publishing Group UK London}
}

@article{Kato1950, 
author={Kato, Tosio}, 
title={On the Adiabatic Theorem of Quantum Mechanics}, 
journal={J. Phys. Soc. Jpn.}, 
volume={5}, 
pages={435--439}, 
year={1950}}

@book{KatoBook, 
author={Kato, Tosio}, 
title={Perturbation Theory for Linear Operators}, 
publisher={Springer}, 
address={Berlin}, 
year={1995}}

@article{Oreshkov2009, 
author={Oreshkov, Ognyan and Brun, Todd A. and Lidar, Daniel A.}, 
title={Fault-Tolerant Holonomic Quantum Computation}, 
journal={Phys. Rev. Lett.}, 
volume={102}, 
pages={070502}, 
year={2009}}

@article{BombinMartinDelgado2009, 
author={Bomb{\'i}n, H{\'e}ctor and Martin-Delgado, Miguel A.}, 
title={Quantum Measurements and Gates by Code Deformation}, 
journal={J. Phys. A}, 
volume={42}, 
pages={095302}, 
year={2009}}

@article{Vuillot2019, 
author={Vuillot, Christophe and Lao, Lingling and Criger, Ben and Garc{\'i}a Almud{\'e}ver, Carmen and Bertels, Koen and Terhal, Barbara M.}, 
title={Code Deformation and Lattice Surgery Are Gauge Fixing}, 
journal={New J. Phys.}, 
volume={21}, 
pages={033028}, 
year={2019}}

@book{ConwaySmith, 
author={Conway, John H. and Smith, Derek A.}, 
title={On Quaternions and Octonions}, 
publisher={A K Peters}, 
address={Natick, MA}, 
year={2003}}

@book{GoodmanWallach, 
author={Goodman, Roe and Wallach, Nolan R.}, 
title={Symmetry, Representations, and Invariants}, 
publisher={Springer}, 
address={New York}, 
year={2009}}

@article{bradshaw2026introduction,
  title={Introduction to quantum error correction with stabilizer codes},
  author={Bradshaw, Zachary P and Dale, Jeffrey J and Evans, Ethan N},
  journal={Annals of Physics},
  volume={487},
  pages={170353},
  year={2026},
  publisher={Elsevier}
}

@article{bradshaw2026symmetry,
  title={Symmetry-based quantum codes beyond the {P}auli group},
  author={Bradshaw, Zachary P and LaBorde, Margarite L and Montero, Dillon},
  journal={Physical Review A},
  volume={113},
  number={4},
  pages={042431},
  year={2026},
  publisher={APS}
}

@article{bradshaw2026approximate,
  title={Approximate Error Correction for Quantum Simulations of {SU}(2) Lattice Gauge Theories},
  author={Bradshaw, Zachary P},
  journal={arXiv preprint arXiv:2603.26819},
  year={2026}
}

@book{macwilliams1977theory,
  title={The theory of error-correcting codes},
  author={MacWilliams, Florence Jessie and Sloane, Neil James Alexander},
  volume={16},
  year={1977},
  publisher={Elsevier}
}

\appendix

\section{Numerical methods}
\label{app:numerics}

Let $\mathcal{H}_{\mathrm{red}} \subset \mathcal{H}_{13}$ be the span of $V_{13}$ and of the states $\sigma^a_k \ket{D_w}$ for $k \in \{1, 2\}$, $a \in \{x, y, z\}$, and $0 \le w \le 13$. For fixed $k$, the states $\sigma^a_k \ket{D_w}$ together with $V_{13}$ span the space of states that are symmetric in the twelve qubits other than $k$, which has dimension $2 \cdot 13 = 26$. This space is the sum of $V_{13}$, of total spin $\frac{13}{2}$, and a $12$-dimensional space of total spin $\frac{11}{2}$ associated with qubit $k$. The total-spin-$\frac{11}{2}$ parts for $k = 1$ and $k = 2$ are two different copies of the same spin representation, labeled by vectors $f_1$ and $f_2$ that record which qubit carries the error, and these copies are linearly independent. Hence $\mathcal{H}_{\mathrm{red}}$ is the sum of $V_{13}$ and two copies of the spin-$\frac{11}{2}$ representation, of dimension exactly $14 + 2 \cdot 12 = 38$, in agreement with the numerically computed rank. A collective operator acts on $\mathcal{H}_{\mathrm{red}}$ by one matrix $U_{13/2}$ on $V_{13}$ and one matrix $U_{11/2}$ on both copies, so $\mathcal{H}_{\mathrm{red}}$ is mapped to itself, and we represent all collective operators exactly by their $38 \times 38$ restrictions.

The error-transparent switch was built on $\mathcal{H}_{\mathrm{red}}$ by dividing $\theta \in [\pi, \theta_I]$ into $300$ steps. At each step we formed $T_\theta$ from the ten operators $\id$, $J_x, J_y, J_z$, $\sigma^a_1$, and $\sigma^a_2$, computed $Y_\theta = T_\theta (G^{-1/2} \otimes \id)$ with $G$ the $10 \times 10$ Gram matrix, and advanced the transport of the complement by the unitary part of $P_{\theta'} P_\theta + (\id - P_{\theta'})(\id - P_\theta)$. The three collective operators are needed, and the reason also shows that the transport extends to a collective operator. Write a vector of $\mathcal{H}_{\mathrm{red}}$ as a component in $V_{13}$ plus a component in $V_{11/2} \otimes \mathrm{span}\{f_1, f_2\}$. The error states decompose as $\sigma^a_k A e_i = \frac{2}{13} J_a A e_i + t^a_i \otimes f_k$, with the same $t^a_i \in V_{11/2}$ for $k = 1$ and $k = 2$, so the span of $A e_i$, $J_a A e_i$, $\sigma^a_1 A e_i$, and $\sigma^a_2 A e_i$ is $\mathcal{V}_\theta \oplus (\mathcal{T}_\theta \otimes \mathrm{span}\{f_1, f_2\})$, where $\mathcal{V}_\theta \subset V_{13}$ is spanned by $A e_i$ and $J_a A e_i$ and $\mathcal{T}_\theta \subset V_{11/2}$ by the $t^a_i$. Its projector therefore has the form $P_\theta = P^{(13/2)}_\theta \oplus (P^{(11/2)}_\theta \otimes \id_2)$. Without the collective operators the $V_{13}$ components $\frac{2}{13} J_a A e_i$ cannot be separated from the spin-$\frac{11}{2}$ components, the span is not invariant under total spin, and the resulting transport does not commute with it. The partial isometry $Y_\theta Y_0^\dagger$ maps $J_a A_0 e_i \mapsto J_a A_\theta e_i$ within $V_{13}$ and $t^a_i(0) \otimes f_k \mapsto t^a_i(\theta) \otimes f_k$ with the same map for both $k$, so it has the form $V^{(13/2)} \oplus (V^{(11/2)} \otimes \id_2)$. Operators of this form make up an algebra closed under adjoints and under the functional calculus, so the operators $P_{\theta'} P_\theta + (\id - P_{\theta'})(\id - P_\theta)$, their unitary parts, and the final $U(\lambda)$ all have the form $U^{(13/2)} \oplus (U^{(11/2)} \otimes \id_2)$. This is what it means for the operator to act in the same way on both spin-$\frac{11}{2}$ copies, and it is exactly the structure of a collective operator restricted to $\mathcal{H}_{\mathrm{red}}$, so $U(\lambda)$ extends to a collective operator on $\mathcal{H}_{13}$, which we take to act as the identity on the total-spin sectors not reached by a single fault. Commutation with $J_x^2 + J_y^2 + J_z^2$ alone would not establish this, since it does not constrain the action on the multiplicity factor. In the construction reported here the two spin-$\frac{11}{2}$ blocks of $U(1)$ agree to $10^{-13}$, and the ten-operator frame has rank $20$ with Gram matrix constant along the path by Theorem~\ref{thm:gram} together with the identities $A^\dagger J_a J_b A = \frac{j(j+1)}{3}\delta_{ab}\id$ and $A^\dagger J_a \sigma^b_k A = \frac{2}{13} A^\dagger J_a J_b A$ of Proposition~\ref{prop:reduction} and the proof of Lemma~\ref{lem:PIKL}.

A fault $\sigma^a_1$ produces a state that is unchanged by permutations of qubits $2, \ldots, 13$. The recovery of $\mathcal{C}_{A_I}$ that we use measures the orthonormal syndrome subspaces of its error space and maps each back to the code, and it commutes with qubit permutations, so its action on such a state is determined by the part of the error space with the same symmetry. That part is spanned by $A_I e_i$, $\sigma^a_1 A_I e_i$, and $(2J_a - \sigma^a_1) A_I e_i$ and lies in $\mathcal{H}_{\mathrm{red}}$. We orthonormalize these vectors using the inverse square root of their Gram matrix, which has the form $g \otimes \id$ by Theorem~\ref{thm:gram}, to obtain syndrome vectors $s_\eta \otimes e_i$. For a faulty encoded map $F\colon W \to \mathcal{H}_{\mathrm{red}}$ we form the $2 \times 2$ matrices $\Delta_\eta = [\braket{s_\eta \otimes e_i | F e_{i'}}]_{ii'}$. The recovery is completed by sending the component of the state orthogonal to the error space to an orthogonal erasure flag, so that $\sum_\eta \Delta_\eta^\dagger \Delta_\eta = F^\dagger P_{\mathrm{err}} F$ need not equal $\id$. With this completion the entanglement fidelity of the recovered channel, for a maximally mixed logical input, is $\sum_\eta \lvert \mathrm{tr}\, \Delta_\eta \rvert^2 / 4$, the leakage probability is $p_{\mathrm{erase}} = 1 - \frac12 \mathrm{tr} \sum_\eta \Delta_\eta^\dagger \Delta_\eta$, and the fidelity conditioned on no erasure is $F_{e \mid \mathrm{survival}} = F_e / (1 - p_{\mathrm{erase}})$. The arithmetic mean of $1 - F_{e \mid \mathrm{survival}}$ over the sampled faults is $0.012$ for the geodesic switch and $0.718$ for the pulse sequence, while conditioning the pooled experiment on no erasure, which gives $1 - \bar F_e / (1 - \bar p_{\mathrm{erase}})$ in terms of the sample means, yields $0.010$ and $0.712$. The two averages weight the samples differently and neither combines with the mean leakage into the mean total infidelity. For a fault after the switch the fidelity equals $1$ to machine precision, which checks the procedure.

Fault times were sampled uniformly in the path parameter, at $31$ values of $\theta$ for the error-transparent switch and $31$ values of the rotation angle for the geodesic switch. For the pulse sequence we sampled each pulse at the fractions $f = 0, 0.1, \ldots, 0.9$ of its duration $\lvert t_i \rvert$ and placed the sample at $(i - 1 + f)/4$ on the horizontal axis of Fig.~\ref{fig:faults}; the reported means are averages over these $40$ samples and the three Pauli types. The pulse times were found by minimizing the switching infidelity on $\M(13)$ and then verified on $\mathcal{H}_{\mathrm{red}}$.

\end{document}